\documentclass[journal,12pt,onecolumn, draft]{IEEEtran}
\usepackage[utf8]{inputenc}
\usepackage[OT1]{fontenc}
\usepackage{amsfonts, amsmath, amsthm, amssymb}
\usepackage{graphicx}
\usepackage{listings}
\usepackage[margin=1in]{geometry}
\usepackage{xcolor}
\usepackage{hyperref}
\usepackage{xcolor}
\usepackage{multirow}
\usepackage{tabularx,booktabs}
\usepackage{lipsum}
\usepackage{diagbox}
\usepackage{tablefootnote}

\newtheorem{definition}{Definition}
\newtheorem{theorem}{Theorem}
\newtheorem{example}{Example}
\newtheorem{corollary}{Corollary}
\newtheorem{lemma}{Lemma}
\newtheorem{remark}{Remark}
\newtheorem*{construction_classic}{Construction~I}
\newtheorem*{construction_Phi_classic}{Construction~I'}
\newtheorem*{construction_Blackburn}{Construction~I.A}
\newtheorem*{construction_Bilotta}{Construction~II}
\newtheorem*{construction_Phi_Bilotta}{Construction~II'}

\providecommand{\pre}{\mathrm{Pre}}
\providecommand{\suf}{\mathrm{Suf}}
\newcolumntype{C}{>{\centering\arraybackslash}X} 
\setlength{\extrarowheight}{1pt}

\title{$Q$-ary non-overlapping codes : a generating function approach\thanks{The authors were partially supported by National Key Research and Development Program of China (Grant No. 2018YFA0902600) and the Shenzhen fundamental research programs (Grant No. 20200925154814002).}}

\begin{document}

\author{Geyang~Wang, \and Qi~Wang
\thanks{
G. Wang is with the Department of Computer Science and Engineering, Southern University of Science and Technology, Shenzhen 518055, China (email: 11930618@mail.sustech.edu.cn).
}

\thanks{
Q. Wang is with the Department of Computer Science and Engineering, and is also with National Center for Applied Mathematics (Shenzhen), Southern University of Science and Technology, Shenzhen 518055, China (email: wangqi@sustech.edu.cn).
}

}
\maketitle

\begin{abstract}
Non-overlapping codes are a set of codewords in $\bigcup_{n \ge 2} \mathbb{Z}_q^n$, where $\mathbb{Z}_q = \{0,1,\dots,q-1\}$, such that, the prefix of each codeword is not a suffix of any codeword in the set, including itself; and for variable-length codes, a codeword does not contain any other codeword as a subword. In this paper, we investigate a generic method to generalize binary codes to $q$-ary for $q > 2$, and analyze this generalization on the two constructions given by
Levenshtein (also by Gilbert; Chee, Kiah, Purkayastha, and Wang) and Bilotta, respectively. The generalization on the former construction gives large non-expandable fixed-length non-overlapping codes whose size can be explicitly determined; the generalization on the later construction is the first attempt to generate $q$-ary variable-length non-overlapping codes. More importantly, this generic method allows us to utilize the generating function approach to analyze the
cardinality of the underlying $q$-ary non-overlapping codes. The generating function approach not only enables us to derive new results, e.g., recurrence relations on their cardinalities, new combinatorial interpretations for the constructions, and the limit superior of their cardinalities for some special cases, but also greatly simplifies the arguments for these results. Furthermore, we give an exact formula for the number of fixed-length words that do not contain the codewords in a variable-length non-overlapping code as subwords. This thereby solves an open problem by Bilotta and induces a recursive upper bound on the maximum size of variable-length non-overlapping codes.
\end{abstract}

\textbf{Index Terms}: Non-overlapping code, variable-length code, generating function.

\section{Introduction}

Motivated by applications for synchronization, non-overlapping codes were first defined by Levenshtein in 1964 under the name \emph{strongly regular codes}, a.k.a. \emph{codes without overlaps}~\cite{levenshtein1964decoding},~\cite{levenshtein1970maximum},~\cite{levenshtein2004combinatorial}.
A code $S\subseteq \bigcup_{n \ge 2} \mathbb{Z}_q^n$ is called \emph{non-overlapping} if $S$ satisfies the following two conditions:
\begin{enumerate}
    \item[(1)] No non-empty prefix of each codeword is a suffix of any one, including itself;
    \item[(2)] For all distinct $\mathbf{u,v} \in S$, $\mathbf{u}$ does not contain $\mathbf{v}$ as a subword.
\end{enumerate}

We say that $S$ is a fixed-length non-overlapping code if $S \subseteq \mathbb{Z}_q^n$, otherwise it is called a variable-length non-overlapping code.
In this paper, we consider both fixed-length and variable-length cases.
Fixed-length non-overlapping codes have been intensively studied in the literature. 
Let $C(n,q)$ be the maximum size of a $q$-ary non-overlapping codes of length $n$.
The main research problems are to construct non-overlapping codes as large as possible in size and to bound $C(n,q)$. 
The first construction was proposed by Levenshtein in 1964~\cite{levenshtein1964decoding},~\cite{levenshtein1970maximum} (Construction~I, see also~\cite{gilbert1960synchronization},~\cite{chee2013cross}).
Following the work by de Lind van Wijngaarden and Willink~\cite{van2000frame} in 2000,
Bajic and Stojanovic~\cite{bajic2004distributed} independently rediscovered binary fixed-length non-overlapping codes (under the name \emph{cross-bifix-free codes}) in 2004.
In 2012, Bilotta, Pergola, and Pinzani~\cite{bilotta2012new} provided a binary construction based on Dyck paths, by which the code size is smaller than Levenshtein's.
However, it reveals an interesting connection between non-overlapping codes and other combinatorial objects.
In 2013, Chee, Kiah, Purkayastha, and Wang~\cite{chee2013cross} rediscovered Levenshtein's construction (Construction~I), and verified that it is optimal for $q=2$ and $n \le 16$, expect when $n = 9$ by computer search. 
In 2015, Blackburn~\cite{blackburn2015non} generalized Levenshtein's construction and thereby provided a class of largest fixed-length non-overlapping codes (see Construction~I.A) whenever $n$ divides $q$.
In 2016, Barucci, Bilotta, Pergola, Pinzani, and Succi~\cite{barcucci2016cross} proposed a construction for $q$-ary ($q \ge 3$) fixed-length non-overlapping code based on colored Motzkin paths.

The best known lower bound of $C(n,q)$ for fixed $q$, proposed by Levenshtein~\cite{levenshtein1964decoding} (see also~\cite{chee2013cross},~\cite{gilbert1960synchronization},~\cite{levy2018mutually}), states that,
\begin{equation*}
    C(n,q)  \gtrsim \frac{q-1}{qe}\frac{q^n}{n},
\end{equation*} 
as $n \to \infty$ over the subsequence $n = \frac{q^i - 1}{q-1}$ for $i = 1,2,\dots$, where $e$ is the base of natural logarithm. 
This lower bound is derived from the cardinality of non-overlapping codes by Construction~I.
The best known upper bound, also by Levenshtein~\cite{levenshtein1970maximum}, states that 
\begin{equation}\label{equ: lev bound}
    C(n,q) \le \left(\frac{n-1}{n}\right)^{n-1} \frac{q^n}{n}.
\end{equation}
Blackburn~\cite{blackburn2015non} showed that this bound is tight if $n$ divides $q$. 
A weaker bound 
\begin{equation}\label{equ: com bound}
    C(n,q) \le \frac{q^n}{2n-1}
\end{equation}
was found by Chee \textit{et al.}~\cite{chee2013cross} independently, and Blackburn~\cite{blackburn2015non} further proved the equality cannot hold via a simple proof.
Moreover, Blackburn~\cite{blackburn2015non} showed that for fixed $n \ge 2$, 
\begin{equation*}
    \liminf_{q \to \infty}\frac{C(n,q)}{q^n} = \frac{1}{n} \left(\frac{n-1}{n}\right)^{n-1},
\end{equation*}
and there exist absolute constants $c_1$, $c_2$ such that $c_1 (q^n/n) \le C(n,q) \le c_2(q^n/n)$ for all $n,q$ larger than $1$.

Recently, non-overlapping codes have also found applications in DNA storage systems~\cite{levy2018mutually},~\cite{yazdi2015rewritable}, pattern matching~\cite{crochemore2007algorithms}, and automata theory~\cite{berstel2010codes}.
For other related work on fixed-length non-overlapping codes, see also~\cite{levy2018mutually},~\cite{bajic2014simple},~\cite{bernini2014prefix},~\cite{bernini2017gray}.

Variable-length non-overlapping codes, defined several decades ago, were not further studied until recently.
In 2017, Bilotta~\cite{bilotta2017variable} constructed a class of binary variable-length non-overlapping codes (see Construction~II) and gave a recurrence relation on their cardinalities.
It was proved that the cardinality is about $[2(1 - \epsilon)]^n$ as $n \to \infty$, where $\epsilon$ is a small positive value.
In addition, a recursive bound on the maximum size of binary variable-length non-overlapping codes was proposed. 
More precisely, let $\mathcal{J}_2 = \bigcup_{n \ge 2} J_2(n)$ be a binary variable-length code, where $J_2(n) \subseteq \mathbb{Z}_2^n$.
Denote by $b_{\mathcal{J}_2(n)}(i)$ the number of $i$-length binary words that do not contain codewords in $\mathcal{J}_2$ as subwords.
It was shown that
\begin{equation}\label{equ: bilotta's upper bound}
    |J_2(n)| < \frac{2^n}{n+1} - \sum_{i=1}^{h-2}2^i |J_2(n-i)| - \frac{1}{2}\sum_{i=h}^{n+1-h} b_{\mathcal{J}_2(n)}(i) \cdot |J_2(n+1-i)|.
\end{equation}
However, the exact expression for $b_{\mathcal{J}_2(n)}(i)$ was left open by Bilotta~\cite{bilotta2017variable}. 

Variable-length non-overlapping codes could also be used in DNA storage systems as the address sequences~\cite{yazdi2015rewritable}. 
More importantly, for two codewords with length $m$ and $m+d$ respectively, their Hamming distance is admitted at least $d$. 
This is the major advantage of using variable-length non-overlapping codes for DNA storage systems.

The main contribution of this paper is summarized as follows. 
Firstly, a generic method to generalize binary non-overlapping codes to $q$-ary is provided;
secondly, a generating function approach is utilized to analyze the generalized constructions of both Levenshtein's and Bilotta's. More precisely, for Construction~I' (the generalized Levenshtein's construction), we find a $2$-term recurrence relation of the cardinalities, and it reveals a new combinatorial interpretation for the structure of this classic construction. 
In addition, the asymptotic behavior of the cardinality of codes by Construction~I' for a special case is also given.
Numerical results suggest that Construction~I' may outperform other constructions of fixed-length non-overlapping codes, especially when $q$ is large.
Meanwhile, for Construction~II' (the generalized Bilotta's construction), we give the generating function and a $3$-term recurrence relation of the cardinalities which cannot be trivially obtained by Bolotta's method.
Finally, an exact formula of  $b_{\mathcal{J}_2(n)}(i)$ is given and Eq.~\eqref{equ: bilotta's upper bound} is further generalized to the $q$-ary case.

The rest of this paper is organized as follows. 
In Section \ref{sec: notations and definitions}, we provide some necessary notations and definitions.
In Section \ref{sec: from binary to qary}, we introduce the generic method to generalize binary codes to $q$-ary, and investigate how it works on the aforementioned two constructions.
In Section \ref{sec: upper bounds on variable-length non-overlapping codes}, we give an exact formula of $b_{\mathcal{J}_2(n)}(i)$ and generalize the recursive upper bound Eq.~\eqref{equ: bilotta's upper bound} for $q$-ary non-overlapping codes.
Finally, we conclude this paper with some open problems in Section \ref{sec: conclusions}.

\section{Notations and definitions} \label{sec: notations and definitions}

In this paper, both $q$-ary codewords and words are vectors over $\mathbb{Z}_q = \{0, 1,\dots, q-1\}$. We use $[n]$ to denote the set $\{1, \ldots, n\}$ for an integer $n \ge 1$. It is convenient to write vectors as strings. For example, $02101$ represents the vector $(0,2,1,0,1)$. A $q$-ary code is a set of codewords over $\mathbb{Z}_q$ and is called \emph{variable-length} if its codewords have different lengths. The size of a $q$-ary code $S$ is the number of codewords in $S$, and is denoted by $|S|$.
The empty set is denoted by $\emptyset$.

\begin{definition}\label{def: prefix and suffix}
    Let $n,q$ be integers larger than $1$. For $\mathbf{u} \in \mathbb{Z}_q^n$, the set of prefixes of $\mathbf{u}$ is  $\pre (\mathbf{u}) = \{ (u_1, u_2, \dots u_i) \mid 1 \le i \le n-1\}$, and the set of suffixes of $\mathbf{u}$ is $\suf(\mathbf{u}) = \{ (u_i, u_{i+1}, \dots, u_n) \mid 2 \le i \le n \}$. 
\end{definition}

For example, $\pre(0011) = \{0,00,001\}$ and $\suf(0011) = \{1, 11,011\}$.

\begin{definition}[Non-overlapping codes]\label{def: non-overlapping codes}
    A $q$-ary non-overlapping code is a finite or countable code $S \subseteq \bigcup_{n \ge 2} \mathbb{Z}_q^n$ which satisfies the following two conditions:
    \begin{enumerate}
        \item[(1)] For all $\mathbf{u},\mathbf{v} \in S$, $\pre(\mathbf{u}) \cap \suf(\mathbf{v}) = \emptyset$;
        \item[(2)] For all distinct $\mathbf{u,v} \in S$, $\mathbf{u}$ does not contain $\mathbf{v}$ as a subword.
    \end{enumerate} 
\end{definition}

For example, $\{00101,00111\}$, $\{11101000, 111011000\}$ are both binary non-overlapping codes, while $\{0111, 0011\}$ is not since $011 \in \pre(0111) \cap \suf(0011)$.
In addition, $\{10, 1{10}0\}$ is not a non-overlapping code since $10$ is a subword of $1100$.

\section{A generic method to generalize binary codes to $q$-ary } \label{sec: from binary to qary}
We now define a generic way to generalize binary codes to $q$-ary codes.

\begin{definition} \label{def: binary to q-ary codes}
    Let $I,J$ form a bipartition of $\mathbb{Z}_q$ with $q \ge 2$ and $\mathbf{\omega} = (\omega_1, \dots, \omega_n) \in \mathbb{Z}_2^n$. 
    We define a map $\phi_{I,J}(\cdot)$ from a binary codeword to a set of $q$-ary codewords as follows.
    \[
            \phi_{I,J}(\mathbf{\omega}) = A_1 \times A_2 \times \cdots \times A_{n},
    \]
    where $\times$ is the Cartesian product and for $1 \le i \le n$,
    \[
        A_i =
        \begin{cases}
            I & \text{if } {\omega}_i = 0, \\ 
            J & \text{if } {\omega}_i = 1.    
        \end{cases}
    \]
    Define the map $\Phi_{I,J}(\cdot)$ from a binary code $S \subseteq \bigcup_{n \ge 2} \mathbb{Z}_2^n$ to a $q$-ary code as
    \[
    \Phi_{I,J}(S) = \bigcup_{\mathbf{\omega} \in S} \phi_{I,J}(\mathbf{\omega}).
    \]
\end{definition}

\begin{example}\label{ex: binary to q-ary words}
    Let $I = \{0,2\}, J = \{1,3\}$. Then 
    \[
    \phi_{I,J}(001) = \{0,2\} \times \{0,2\} \times \{1,3\}
    \]
    is a code over $\mathbb{Z}_4$ with length $3$ and size $8$.
\end{example}

The non-overlapping property is preserved by the map we defined above.
We give the following lemma.
\begin{lemma}\label{corollary: binary non-overlapping to q-ary non-overlapping}
    Let notations be the same as above.
    If $S$ is a binary non-overlapping code,
    then $\Phi_{I,J}(S)$ is a $q$-ary non-overlapping code.
\end{lemma}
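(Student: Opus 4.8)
The plan is to exploit the fact that $\phi_{I,J}$ admits a natural coordinate-wise left inverse, and to push any non-overlap violation in $\Phi_{I,J}(S)$ down to $S$. Define $\psi : \mathbb{Z}_q \to \mathbb{Z}_2$ by $\psi(a) = 0$ if $a \in I$ and $\psi(a) = 1$ if $a \in J$; since $I,J$ partition $\mathbb{Z}_q$, this is well defined. Extend $\psi$ to words coordinate-wise, so $\psi(\mathbf{x}) = (\psi(x_1), \dots, \psi(x_m))$ for $\mathbf{x} \in \mathbb{Z}_q^m$. The crucial observation is that for every $\mathbf{\omega} \in \mathbb{Z}_2^n$ and every $\mathbf{x} \in \phi_{I,J}(\mathbf{\omega})$ we have $\psi(\mathbf{x}) = \mathbf{\omega}$; thus each codeword of $\Phi_{I,J}(S)$ carries a unique binary ``label'' in $S$. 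Moreover $\psi$ preserves length and, being coordinate-wise, commutes with the operations of taking a prefix, a suffix, and a (contiguous) subword: if $\mathbf{w}$ is a prefix (resp.\ suffix, resp.\ subword) of $\mathbf{x}$, then $\psi(\mathbf{w})$ is a prefix (resp.\ suffix, resp.\ subword) of $\psi(\mathbf{x})$ of the same length as $\mathbf{w}$.

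Granting these properties, I would verify the two conditions of Definition~\ref{def: non-overlapping codes} by contradiction. For condition~(1), suppose some $\mathbf{w}$ lies in $\pre(\mathbf{x}) \cap \suf(\mathbf{y})$ for $\mathbf{x},\mathbf{y} \in \Phi_{I,J}(S)$, where $\mathbf{x} \in \phi_{I,J}(\mathbf{\omega})$ and $\mathbf{y} \in \phi_{I,J}(\mathbf{\omega}')$ with $\mathbf{\omega},\mathbf{\omega}' \in S$. Since $\mathbf{w}$ is a (proper) prefix and suffix, it is nonempty; applying $\psi$ and using length-preservation, $\psi(\mathbf{w})$ is a nonempty word whose length stays in the admissible prefix/suffix range, and it is simultaneously a prefix of $\psi(\mathbf{x}) = \mathbf{\omega}$ and a suffix of $\psi(\mathbf{y}) = \mathbf{\omega}'$. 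Hence $\psi(\mathbf{w}) \in \pre(\mathbf{\omega}) \cap \suf(\mathbf{\omega}')$, contradicting condition~(1) for $S$ (which holds for all pairs, including $\mathbf{\omega} = \mathbf{\omega}'$).

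For condition~(2), suppose distinct $\mathbf{x},\mathbf{y} \in \Phi_{I,J}(S)$ satisfy that $\mathbf{x}$ contains $\mathbf{y}$ as a subword, with labels $\mathbf{\omega} = \psi(\mathbf{x})$ and $\mathbf{\omega}' = \psi(\mathbf{y})$ in $S$; then $\mathbf{\omega}$ contains $\mathbf{\omega}'$ as a subword. If $\mathbf{\omega} \neq \mathbf{\omega}'$ this immediately contradicts condition~(2) for $S$. The delicate case, which I expect to be the only genuinely subtle step, is $\mathbf{\omega} = \mathbf{\omega}'$: here $\mathbf{x}$ and $\mathbf{y}$ have the same length $|\mathbf{\omega}|$, and a subword of $\mathbf{x}$ of full length must equal $\mathbf{x}$, forcing $\mathbf{y} = \mathbf{x}$ and contradicting distinctness. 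Everything else is a routine transfer along $\psi$, so assembling the two contradictions shows that $\Phi_{I,J}(S)$ satisfies both conditions and is therefore a $q$-ary non-overlapping code.
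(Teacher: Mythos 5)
Your proof is correct and takes essentially the same route as the paper's: a proof by contradiction that transfers any violation of either non-overlapping condition in $\Phi_{I,J}(S)$ back to a violation in the binary code $S$. Your explicit projection map $\psi$ (and your explicit treatment of the case $\mathbf{\omega} = \mathbf{\omega}'$ in condition~(2), where distinctness of the $q$-ary words is rescued by length preservation) merely makes rigorous the steps the paper's proof leaves implicit.
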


\begin{proof}
    Suppose to the contrary that $\Phi_{I,J}(S)$ is overlapping. 
    Then there are two possible cases:
    \begin{enumerate}
        \item[Case 1):] There exist two $q$-ary codewords $\mathbf{a}' = (a_1, \dots, a_{n_1})$ and $\mathbf{b'} = (b_1, \dots, b_{n_2}) $ from $ \Phi_{I,J}(S)$ such that $\pre(\mathbf{a}') \cap \suf(\mathbf{b}') \ne \emptyset$. 
        W.l.o.g., assume that $(x_1,\dots,x_k) \in \pre(\mathbf{a}') \cap \suf(\mathbf{b}')$ for a certain integer $k$. 
        Thus we have $({a}'_1, \dots, {a}'_k) = ({b}'_{n_2 - k + 1}, \dots, {b}'_{n_2})$. 
        By Definition \ref{def: binary to q-ary codes}, there exist two $q$-ary codewords $\mathbf{a} = (a_1, \dots, a_{n_1}) ,\mathbf{b} = (b_1,\dots,b_{n_2}) $from $ S$ such that $\mathbf{a}' \in \phi_{I,J}(\mathbf{a})$ and $\mathbf{b}' \in \phi_{I,J}(\mathbf{b})$. It then follows that $({a}_1, \dots, {a}_k) = ({b}_{n_2 - k + 1}, \dots, {b}_{n_2})$. 
        This is impossible since $S$ is non-overlapping.
    
        \item[Case 2):] There exist two $q$-ary codewords $\mathbf{a}',\mathbf{b'} $ from $ \Phi_{I,J}(S)$ such that $\mathbf{a}'$ contains $\mathbf{b'}$ as a subword.
        By Definition \ref{def: binary to q-ary codes}, there exist two $q$-ary codewords $\mathbf{a},\mathbf{b} $ from $S$ such that $\mathbf{a}' \in \phi_{I,J}(\mathbf{a})$, $\mathbf{b}' \in \phi_{I,J}(\mathbf{b})$, and $\mathbf{a}$ contains $\mathbf{b}$ as a subword.
        This is again impossible since $S$ is non-overlapping.
    \end{enumerate}
    The proof is then completed.
\end{proof}

In the following, we investigate how the generalization in Definition~\ref{def: binary to q-ary codes} works on two known constructions. 
Lemma~\ref{lemma: root of denominator} will be useful to analyze the cardinalities of the codes by the two generalized constructions.

\begin{lemma}\cite[Lemma~3.6, Theorem~3.9]{wolfram1996solving}\label{lemma: root of denominator}
    Let $k$ be an integer larger than $1$.
    The equation $y^k - \sum_{i=0}^{k-1}y^i =0$ has one positive real root $y_0$ in the open interval $(1,2)$. If $k$ is even, it also has a negative real root $y_1$ belonging to $(-1,0)$. Moreover, $y_0 = 2(1-\epsilon_k) > 2(1-2^{-k})$ and approaches to $2$ as $k \to \infty$, where 
    \begin{equation}\label{equ: epsilon_k}
        \epsilon_k = \sum_{i \ge 1} \binom{(k+1)i - 2}{i - 1} \frac{1}{i 2^{(k+1)i}}.    
    \end{equation}
    The root $y_1$ and each complex root have modulus lying in the open interval $(3^{-k},1)$. 
\end{lemma}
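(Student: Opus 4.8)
The plan is to clear denominators and reduce everything to a single cleaner polynomial. Writing $\sum_{i=0}^{k-1}y^i = (y^k-1)/(y-1)$ for $y\neq 1$ and multiplying the equation by $(y-1)$, the roots in question become the roots of $P(y) := y^{k+1} - 2y^k + 1$ other than the spurious root $y=1$ (introduced by the multiplication, and not a root of the original equation since $k>1$). Working with $P$, or equivalently with $f(y):=y^k-\sum_{i=0}^{k-1}y^i$, makes the elementary claims immediate. For the positive real root, dividing $f(y)=0$ by $y^k$ gives $\sum_{i=1}^{k}y^{-i}=1$, whose left-hand side is strictly decreasing on $(0,\infty)$; this yields uniqueness of the positive root, and since that side equals $k>1$ at $y=1$ and $1-2^{-k}<1$ at $y=2$, the root $y_0$ lies in $(1,2)$. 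For even $k$, the sign changes $f(0)=-1<0$ and $f(-1)=1>0$ produce the claimed negative root $y_1\in(-1,0)$ by the intermediate value theorem.

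Next I would extract the asymptotics of $y_0$ from $P(y_0)=0$, rewritten as the fixed-point relation $y_0 = 2 - y_0^{-k}$. Substituting $y_0 = 2(1-\epsilon_k)$ turns this into $\epsilon_k = 2^{-(k+1)}(1-\epsilon_k)^{-k}$, i.e. $\epsilon_k = x\,\phi(\epsilon_k)$ with $x = 2^{-(k+1)}$ and $\phi(\epsilon)=(1-\epsilon)^{-k}$. Lagrange inversion then gives $\epsilon_k = \sum_{i\ge 1}\frac{x^i}{i}[\epsilon^{i-1}]\phi(\epsilon)^i$, and since $[\epsilon^{i-1}](1-\epsilon)^{-ki}=\binom{(k+1)i-2}{i-1}$, this reproduces exactly Eq.~\eqref{equ: epsilon_k}. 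The bound $\epsilon_k < 2^{-k}$ (equivalently $y_0 > 2(1-2^{-k})$) follows either by bounding the tail of this series against its leading term $2^{-(k+1)}$, or by checking that the increasing map $\epsilon\mapsto 2^{-(k+1)}(1-\epsilon)^{-k}$ sends $[0,2^{-k}]$ into itself, using $(1-2^{-k})^k>1/2$ for $k\ge 2$. Combined with $y_0<2$ this squeezes $y_0\to 2$ as $k\to\infty$.

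For the modulus statement I would pass to the reciprocal polynomial. Since $g(z):=z^{k+1}P(1/z)=z^{k+1}-2z+1=(z-1)\bigl(\sum_{i=1}^{k}z^i-1\bigr)$, the reciprocals $z=1/y$ of the roots $y$ of the original equation are exactly the roots of $\sum_{i=1}^{k}z^i-1$, so $|y|<1$ is equivalent to $|z|>1$. For the lower bound on $|y|$, rearranging $f(y)=0$ as $1 = y^k-\sum_{i=1}^{k-1}y^i$ and applying the triangle inequality gives $1\le \sum_{i=1}^{k}|y|^i \le |y|/(1-|y|)$ whenever $|y|<1$, forcing $|y|\ge 1/2 > 3^{-k}$ (and the case $|y|\ge 1$ is trivial). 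For the upper bound I would count the roots of $g$ inside the unit disk by Rouch\'e's theorem, comparing $g$ with $1-2z$: on $|z|=1$ one has $|z^{k+1}|=1\le |1-2z|$ with equality only at $z=1$, so $g$ has exactly one root in the open unit disk, necessarily $z_0=1/y_0$, while the remaining $k-1$ non-spurious roots satisfy $|z|>1$, i.e. the corresponding $y$ satisfy $|y|<1$.

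The main obstacle is precisely this last step: the Rouch\'e comparison is borderline because $g$ vanishes on the contour at $z=1$ (the reciprocal of the extraneous root), so the strict inequality $|z^{k+1}|<|1-2z|$ fails at that single point. I would handle this by deforming the unit circle along a small arc that excludes $z=1$, pushing slightly outside the disk, verifying that strict dominance holds on the modified contour and that neither $z_0$ nor any spurious root is gained or lost, so that the count of exactly one interior zero becomes rigorous. Everything else is routine calculus and generating-function manipulation; the careful accounting of the boundary zero in the Rouch\'e argument is the only genuinely technical point.
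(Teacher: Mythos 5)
First, a point of order: the paper does not prove this lemma at all --- it is quoted verbatim from \cite[Lemma~3.6, Theorem~3.9]{wolfram1996solving} --- so your proposal can only be compared against the cited literature and standard arguments. Most of your reconstruction is sound: the reduction to $P(y)=y^{k+1}-2y^k+1$ with the spurious root $y=1$, the monotonicity argument locating the unique positive root $y_0\in(1,2)$, the sign change giving $y_1\in(-1,0)$ for even $k$, the Lagrange-inversion derivation of $\epsilon_k$ from the fixed-point relation $\epsilon_k=2^{-(k+1)}(1-\epsilon_k)^{-k}$ (this is essentially how Wolfram's Theorem~3.9 proceeds), the self-map argument for $\epsilon_k<2^{-k}$, and the triangle-inequality lower bound, which in fact yields the stronger statement $|y|>1/2>3^{-k}$ for every root. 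The one unaddressed point in the $\epsilon_k$ step is analytic rather than formal: you must check that the Lagrange series actually converges at $x=2^{-(k+1)}$ (its radius of convergence is $k^k/(k+1)^{k+1}$, which exceeds $2^{-(k+1)}$ for all $k\ge 2$) and that its sum is the fixed point in $[0,2^{-k}]$ rather than the spurious fixed point $\epsilon=1/2$ corresponding to $y=1$; this is routine but cannot be skipped.

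The genuine gap is in the Rouch\'e step, exactly the point you flag as the technical crux: your deformation goes the wrong way. If you bump the contour \emph{outside} the disk near $z=1$, then (i) $z=1$ lies in the interior of the contour, so the interior contains two zeros of $g$ (namely $z_0$ and $1$) while $1-2z$ has only one zero there, so no Rouch\'e comparison with $1-2z$ can possibly hold on such a contour; and indeed (ii) the dominance fails on the bump, since at $z=1+\delta$ one has $|z^{k+1}|=(1+\delta)^{k+1}>1+2\delta=|1-2z|$ for $k\ge 2$. To exclude $z=1$ you must indent \emph{inward}, into the disk; there the strict inequality does hold for small $\delta$, but only with a second-order margin near the junction points, so the verification you defer is genuinely delicate. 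A much cleaner repair avoids contour surgery altogether: apply Rouch\'e on the circle $|z|=r$ with $r<1$ close to $1$. There $|z^{k+1}|=r^{k+1}$ and $|1-2z|\ge 2r-1$, and the function $2r-1-r^{k+1}$ vanishes at $r=1$ with derivative $2-(k+1)<0$, hence is strictly positive for $r$ just below $1$. Thus $g$ and $1-2z$ have the same number of zeros in $|z|<r$, namely one, for every such $r$; letting $r\to 1^-$ shows $g$ has exactly one zero in the open unit disk, which must be $z_0$. Since the only zero of $g$ on the unit circle is $z=1$ (as you correctly showed), the remaining $k-1$ zeros lie strictly outside, i.e.\ the non-dominant roots of the original equation satisfy $|y|<1$, completing the claim.
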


\subsection{The first construction} \label{sec: non-overlapping codes}
We first review the classic construction given by Levenshtein~\cite{levenshtein1964decoding},~\cite{levenshtein1970maximum}, which was also considered by Gilbert~\cite{gilbert1960synchronization}, and rediscovered by Chee \textit{et al.}~\cite{chee2013cross}. 

\begin{construction_classic}[\cite{levenshtein1964decoding},~\cite{levenshtein1970maximum},~\cite{gilbert1960synchronization},~\cite{chee2013cross}]\label{cons: classic}
Let $n,q$ be integers larger than $1$ and $1 \le k \le n-1$. Denote by $S_q^{(k)}(n)$ the set of all codewords $\mathbf{s} = (s_1, s_2, \dots, s_n) \in \mathbb{Z}_q^n$ such that
\begin{itemize}
    \item $s_1 = s_2 = \dots = s_k = 0$, and $s_{k+1} \ne 0$, $s_n \ne 0$;
    \item $(s_{k+1}, \dots, s_{n})$ does not contain $k$ consecutive $0$'s.
\end{itemize}
Then $S_q^{(k)}(n)$ is a non-overlapping code.
\end{construction_classic}

\begin{example}\label{ex: ConstructionI}
    Take $k=2,q=2$ and $n=6$. Construction~I gives the following non-overlapping code.
    \[
    S_2^{(2)}(6) = 
    \{
    001011, 001101,001111     
    \}.    
    \]
\end{example}

Chee \textit{et al.}~\cite{chee2013cross} verified that this construction gives largest possible non-overlapping codes for $q=2$ and $n \le 16$, expect for $n = 9$.
Let $C$ be a subset of $\mathbb{Z}_q^k$, we say that $(x_1,x_2, \dots, x_r) \in \mathbb{Z}_q^r$ is \emph{$C$-free} if $r < k$, or if $r \ge k$, $(x_i,x_{i+1}, \dots,x_{i+k-1}) \notin C$ holds for all $i = 1,2,\dots, r-k+1$.
To obtain non-overlapping codes as large as possible in size, Construction~I was generalized by Blackburn~\cite{blackburn2015non} as follows.

\begin{construction_Blackburn}[\cite{blackburn2015non}]\label{cons: blackburn}
    Let $n,q$ be integers larger than $1$ and $1 \le k \le n-1$. Let $I,J$ form a bipartition of $\mathbb{Z}_q$, and $C \subseteq I^k$. Denote by $S_{I,J}^{C}(n)$ the set of all codewords $\mathbf{s} = (s_1, s_2, \dots,s_n) \in \mathbb{Z}_q^n$ such that
    \begin{itemize}
        \item $(s_1,s_2,\dots,s_k) \in C$, and $s_{k+1} \in J$, $s_n \in J$;
        \item $(s_{k+1}, \dots, s_{n})$ is $C$-free.
    \end{itemize}
    Then $S_{I,J}^{C}(n)$ is a non-overlapping code.
\end{construction_Blackburn}

Blackburn~\cite{blackburn2015non} showed that $S_{I,J}^{I}(n) = I \times J^{n-1}$ achieves Levenshtein's bound (see Eq.~\eqref{equ: lev bound}) if $n$ divides $q$ and $|I| = q/n$.
However, the cardinality of $S_{I,J}^{C}(n)$ for general $C$ is hard to decide and left open.

In the following, we give a generalization of Construction~I by the method in Definition~\ref{def: binary to q-ary codes}.
We remark that this is a special case of Construction~I.A.
However, it is worth noting that this leads to more transparent analysis on the cardinality using a generating function approach.
Recalling that Chee \textit{et al.}~\cite{chee2013cross} showed $S_q^{(k)}(n)$ is non-overlapping for $q \ge 2$, then by Lemma~\ref{corollary: binary non-overlapping to q-ary non-overlapping}, $\Phi_{I,J}(S_2^{(k)}(n))$ is also non-overlapping.
It can be easily verified that $\Phi_{I,J}(S_2^{(k)}(n))$ is equivalent to the following construction. 
\begin{construction_Phi_classic}\label{cons: Phi classic}
    Let $n,q$ be integers larger than $1$ and $1 \le k \le n-1$. Let $I,J$ form a bipartition of $\mathbb{Z}_q$. Denote by $S_{I,J}^{(k)}(n)$ the set of all codeword $\mathbf{s} = (s_1, s_2, \dots, s_n) \in \mathbb{Z}_q^n$ such that
    \begin{itemize}
        \item $(s_1,s_2,\dots, s_k)\in I^k$, and $s_{k+1} \in J$, $s_n \in J$;
        \item $(s_{k+1}, \dots, s_{n})$ is $I^k$-free.
    \end{itemize}
    Then $S_{I,J}^{(k)}(n)$ is a non-overlapping code.
\end{construction_Phi_classic}

\begin{remark}\label{remark: phi classic}
    Construction~I can be viewed as a special case of Construction~I'.
    More precisely, taking $I = \{0\}$, $J = \{1,2,\dots,q-1\} = [q-1]$ in Construction~I', we have $S_{\{0\},[q-1]}^{(k)} (n)= S_q^{(k)}(n) = \Phi_{\{0\},[q-1]}(S_2^{(k)}(n))$; 
    Construction~I' can be viewed as a special case of Construction~I.A. 
    More precisely, taking $C = I^k$ in Construction~I.A, we have $S_{I,J}^{I^k}(n) = S_{I,J}^{(k)}(n)$. 
    Hence, $S_{I,J}^{(1)}(n)$ also achieves Levenshtein's bound (see Eq.~\eqref{equ: lev bound}) when $n$ divides $q$ and $|I| = q/n$.
\end{remark}

A fixed-length non-overlapping code $S \subseteq \mathbb{Z}_q^n$ is called {\em non-expandable} if $S \cup \{\mathbf{x}\}$ is overlapping for each $\mathbf{x} \in \mathbb{Z}_q^n \setminus S$.
A non-expandable non-overlapping code must be maximal but may not be of largest size.
In~\cite[Theorem~3.1]{chee2013cross}, it is claimed that $ S_q^{(k)}(n)$ is non-expandable for all $2 \le k \le n-2$. 
However, a key condition was ignored in the proof and the statement was thereby incorrect.
To see this, by Construction~I, $S_2^{(3)}(6) = \{000101,000111\}$, but $S_2^{(3)}(6) \cup \{001101\}$ is still non-overlapping. 
Similarly, $S_2^{(4)}(7) = \{0000101,0000111\}$ but $S_2^{(4)}(7) \cup \{0001001\}$ is still non-overlapping.
In the following, we give a generalized result on the non-expandability of the codes by Construction~I'.

\begin{theorem}\label{thm: non-expandability of ConstructionI'}
    The $q$-ary non-overlapping code $S_{I,J}^{(k)}(n)$ by Construction~I' is non-expandable if $k=n-1$ or $1 \le k < n/2$. 
\end{theorem}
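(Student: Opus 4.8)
The plan is to prove the contrapositive membership statement. Since every codeword has the common length $n$, condition~(2) of non-overlapping is vacuous among distinct length-$n$ words, so $S_{I,J}^{(k)}(n)\cup\{\mathbf x\}$ fails to be non-overlapping precisely when a prefix of one word coincides with a suffix of another, i.e.\ when condition~(1) is violated. I would therefore fix an arbitrary $\mathbf x=(x_1,\dots,x_n)\in\mathbb Z_q^n$ and call it \emph{addable} if $S_{I,J}^{(k)}(n)\cup\{\mathbf x\}$ is still non-overlapping; since the code itself is non-overlapping, this is equivalent to: (i) no prefix of $\mathbf x$ is a suffix of any codeword, (ii) no suffix of $\mathbf x$ is a prefix of any codeword, and (iii) $\mathbf x$ is self-non-overlapping. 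Non-expandability is exactly the assertion that every addable $\mathbf x$ already lies in $S_{I,J}^{(k)}(n)$, so the goal becomes: assuming (i)--(iii), verify the four defining properties of a codeword for $\mathbf x$.

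The main tool is an explicit description of the length-$\ell$ prefix set $\mathcal P_\ell$ and suffix set $\mathcal S_\ell$ of the code. I would first record the elementary facts $\mathcal P_\ell=I^\ell$ for $1\le\ell\le k$, $\mathcal S_1=J$, and (for $\ell\le k$) that $\mathcal S_\ell$ contains every word whose first $\ell-1$ entries lie in $I$ and whose last entry lies in $J$; each of these is proved by exhibiting an explicit codeword, and it is exactly here that the hypothesis $n\ge 2k+1$ (equivalently $k<n/2$) is used, to guarantee enough room to place a short $I$-run at the tail end of a codeword without colliding with the forced symbol $s_{k+1}\in J$ or creating $k$ consecutive symbols of $I$. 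Applying (i)--(ii) to length-$1$ windows yields $x_n\in J$ (otherwise $(x_n)\in\mathcal P_1$) and $x_1\in I$ (otherwise $(x_1)\in\mathcal S_1$); a short induction on $i\le k$ using $\mathcal S_i$ then upgrades this to $x_1,\dots,x_k\in I$. This secures two of the four codeword properties.

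For the remaining two properties, $x_{k+1}\in J$ and the tail $(x_{k+1},\dots,x_n)$ being $I^k$-free, I would argue by contradiction: if either fails, then $\mathbf x$ contains a block of $k$ consecutive symbols of $I$ beginning at some position $p\ge 2$ (at $p=2$ if $x_{k+1}\in I$, at some $p\ge k+1$ if the tail is not $I^k$-free). Taking $p^\ast$ to be the \emph{largest} such block-start, maximality forces $x_{p^\ast+k}\in J$ and $(x_{p^\ast+k},\dots,x_n)$ to be $I^k$-free, while $x_n\in J$ ensures $p^\ast+k\le n$ so the block does not run off the end. Hence the suffix $(x_{p^\ast},\dots,x_n)$ of $\mathbf x$ has precisely the shape of a codeword-prefix and lies in $\mathcal P_{n-p^\ast+1}$, contradicting~(ii). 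This forces $\mathbf x\in S_{I,J}^{(k)}(n)$. The case $k=n-1$, where $n\ge 2k+1$ fails, is treated separately and more simply: there the code is exactly $I^{n-1}\times J$, and a mirror-image argument taking the \emph{first} position carrying a symbol of $J$ and matching the corresponding prefix of $\mathbf x$ against a codeword suffix via $\mathcal S_\ell$ forces $x_1,\dots,x_{n-1}\in I$ and $x_n\in J$.

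I expect the main obstacle to be the bookkeeping in the ``largest block'' step: one must check simultaneously that $(x_{p^\ast},\dots,x_n)$ satisfies all three conditions defining membership in $\mathcal P_{n-p^\ast+1}$ (first $k$ entries in $I$, $(k+1)$-st entry in $J$, own tail $I^k$-free), that its length sits in the admissible range $k+1\le n-p^\ast+1\le n-1$, and that the auxiliary codeword constructions used for $\mathcal P_\ell,\mathcal S_\ell$ really require nothing beyond $k<n/2$. Clarifying why the two regimes $k=n-1$ and $k<n/2$ are the correct dividing line—and, in view of the counterexamples $S_2^{(3)}(6)$ and $S_2^{(4)}(7)$, essentially sharp—is the conceptual crux.
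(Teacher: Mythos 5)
Your proposal is correct and follows essentially the same route as the paper's proof, only stated contrapositively: your $\mathcal{S}_\ell$-induction uses exactly the paper's Case~1 codewords $I^k \times J^{n-k-\ell} \times I^{\ell-1} \times J$ (which is where $k < n/2$ enters in both arguments), and your ``largest block-start'' suffix is the same object as the paper's ``shortest non-$I^k$-free suffix,'' matched against the same codeword obtained by appending letters of $J$. The separate, easy treatment of $k = n-1$ via $S_{I,J}^{(n-1)}(n) = I^{n-1} \times J$ also coincides with the paper's.
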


\begin{proof}
    By Construction~I', the $q$-ary word $\mathbf{x} = (x_1, \dots, x_n)$ overlaps with every codeword from $S_{I,J}^{(k)}(n)$ if $x_1 \in J$ or $x_n \in I$.
    Therefore, we only need to consider $\mathbf{x} \in (I \times \mathbb{Z}_q^{n-2} \times J) \setminus S_{I,J}^{(k)}(n)$.
    Note that $S_{I,J}^{(n-1)}(n) = I^{n-1} \times J$ is clearly non-expandable.
    For $1 \le k < n/2$, there are two cases:
    \begin{enumerate}
        \item[Case 1):] The word $\mathbf{x}$ is $I^k$-free. 
        In this case, it belongs to $I^{l} \times J \times \mathbb{Z}_q^{n-l-1}$ for some $0 \le l \le k-1.$ 
        The $(l+1)$-length prefix of $\mathbf{x}$ is also a suffix for some $\mathbf{y}$ in $I^k \times J^{n - k -l -1} \times I^l \times J \subseteq S_{I,J}^{(k)}(n)$ since $n > 2k \ge k + l + 1$.
        Hence, $S_{I,J}^{(k)}(n) \cup \{ \mathbf{x} \}$ is overlapping.

        \item[Case 2):] The word $\mathbf{x}$ contains $\mathbf{\omega} \in I^k$ as a subword.
        We consider the shortest suffix of $\mathbf{x}$ which is not $I^k$-free.
        In this case, $\mathbf{x}$ belongs to $\mathbb{Z}_q^{n-m-k-1} \times I^k\times J \times T^{(k)}(m)$, where $T^{(k)}(m)$ is the set of $q$-ary $I^k$-free words of length $m < n - k - 1$.
        It then follows that the $(k+1+m)$-length suffix of $\mathbf{x}$ is also a prefix for some $\mathbf{y}$ in $I^k \times J \times T^{(k)}(m) \times J^{n - m - k -1} \subseteq S_{I,J}^{(k)}(n)$.
        Hence, $S_{I,J}^{(k)}(n) \cup \{ \mathbf{x} \}$ is overlapping.
    \end{enumerate}
    Therefore, no additional word $\mathbf{x}$ can be appended to the set $S_{I,J}^{(k)}(n)$ such that $S_{I,J}^{(k)}(n) \cup \{ \mathbf{x} \}$ is still non-overlapping.
    The proof is then completed.
\end{proof}

\begin{remark}
    In the proof of~\cite[Theorem~3.1]{chee2013cross}, the condition $n- k -l -1 > 0$ in Case 1) of the proof of Theorem~\ref{thm: non-expandability of ConstructionI'} was ignored.
\end{remark}

We now clarify the result of~\cite[Theorem~3.1]{chee2013cross} in the following Corollary.

\begin{corollary}\label{corollary: chee non-expandable}
    The code $S_q^{(k)}(n)$ by Construction~I is non-expandable if $k = n-1$ or $1 \le k < n/2$.
\end{corollary}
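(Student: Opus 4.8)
The plan is to recognize that this corollary is an immediate specialization of Theorem~\ref{thm: non-expandability of ConstructionI'} and to carry out the identification explicitly. The key observation, already recorded in Remark~\ref{remark: phi classic}, is that Construction~I is precisely the case $I = \{0\}$ and $J = [q-1]$ of Construction~I'. I would begin by verifying this identification term by term: with $I = \{0\}$ we have $I^k = \{(0,\dots,0)\}$, so the condition $(s_1,\dots,s_k) \in I^k$ becomes $s_1 = \cdots = s_k = 0$; with $J = [q-1]$ the conditions $s_{k+1} \in J$ and $s_n \in J$ become $s_{k+1} \neq 0$ and $s_n \neq 0$; and being $I^k$-free, where $I^k$ is the single all-zeros block of length $k$, is exactly the requirement that $(s_{k+1},\dots,s_n)$ contain no $k$ consecutive zeros. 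Hence $S_q^{(k)}(n) = S_{\{0\},[q-1]}^{(k)}(n)$, matching the equality in Remark~\ref{remark: phi classic}.

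With this identification in hand, the conclusion follows directly. Since Theorem~\ref{thm: non-expandability of ConstructionI'} asserts that $S_{I,J}^{(k)}(n)$ is non-expandable whenever $k = n-1$ or $1 \le k < n/2$, and since $S_q^{(k)}(n)$ is the instance of this family obtained by taking $I = \{0\}$ and $J = [q-1]$, the code $S_q^{(k)}(n)$ inherits non-expandability under the same hypotheses on $k$. I do not anticipate a genuine obstacle here: the only thing to be careful about is confirming that the specialization $I = \{0\}$, $J = [q-1]$ is admissible, namely that these sets form a valid bipartition of $\mathbb{Z}_q$, which they do. The content of the corollary is therefore entirely carried by the preceding theorem, and the corollary merely translates that result back into the original language of Construction~I, thereby supplying the corrected version of the (flawed) claim in~\cite[Theorem~3.1]{chee2013cross}.
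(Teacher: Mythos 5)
Your proposal is correct and follows exactly the paper's intended route: the paper offers no separate argument for this corollary, treating it as the immediate specialization $I=\{0\}$, $J=[q-1]$ of Theorem~\ref{thm: non-expandability of ConstructionI'} via the identification $S_q^{(k)}(n) = S_{\{0\},[q-1]}^{(k)}(n)$ already recorded in Remark~\ref{remark: phi classic}. Your term-by-term verification of that identification is a nice bit of added rigor but does not change the substance of the argument.
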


In the rest of this subsection, we analyze the cardinality of codes by Construction~I' using a generating function approach.
Let $S_{I,J}^{(k)}(n+k)$ be a code defined by Construction~I', and $U_{I,J}^{(k)}(n)$ be the set of $I^k$-free $q$-ary codewords with length $n$ that start and end both with an element in $J$.
Note that $S_{I,J}^{(k)}(n+k)= I^k \times U_{I,J}^{(k)}(n)$, $U_{I,J}^{(1)}(n) = J^n$, $U_{I,J}^{(k)}(1) = J$ and $U_{I,J}^{(k)}(2) = J \times J$.
Therefore, the cardinality of $S_{I,J}^{(k)}(n+k)$ is determined by that of $U_{I,J}^{(k)}(n)$.
Define
\begin{equation}\label{equ: u_{I,J}}
    u_{I,J}^{k}(n) = 
    \begin{cases}
        0 & \text{if }n \le 0, \\
        |U_{I,J}^{k}(n)| & \text{if } n \ge 1,
    \end{cases}
\end{equation}
then we have the following lemma.

\begin{lemma} \label{lemma: ofg of J*J}
    Let $I,J$ form a bipartition of $\mathbb{Z}_q$ and $u_{I,J}^{(k)}(n)$ be defined in~\eqref{equ: u_{I,J}}. For any fixed $k \ge 1$, we have
    \begin{equation} \label{equ: ogf of J*J}
        \sum_{n=0}^{\infty} u_{I,J}^{(k)}(n) x^n = \frac{|J|x (1 - |I|x)}{1 - qx + |I|^k|J|x^{k+1}},
    \end{equation}

    Therefore, we have
    \begin{equation}\label{equ: recurrence relation on J*J}
        u_{I,J}^{(k)}(n) = 
    \begin{cases}
        0  & \text{if } n \le 0, \\
        |J|^n  & \text{if } n = 1,2, \\
        q \cdot u_{I,J}^{(k)}(n-1) - |I|^k|J|u_{I,J}^{(k)}(n-k-1) & \text{if } n \ge 3.
    \end{cases}
    \end{equation}
\end{lemma}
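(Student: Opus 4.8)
The plan is to obtain a combinatorial decomposition of the words enumerated by $u_{I,J}^{(k)}(n)$, convert it into the claimed rational generating function by the standard sequence construction, and then read off the recurrence by clearing denominators. Throughout I write $a = |I|$ and $b = |J|$, so that $a + b = q$.

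First I would unwind the definition of $U_{I,J}^{(k)}(n)$: a word $\mathbf{s}$ in it begins and ends with a symbol from $J$ and is $I^k$-free, which is equivalent to saying that every maximal run of symbols from $I$ has length at most $k-1$ and that such runs can occur only strictly between two $J$-symbols. This yields a unique factorization of each such word as a single leading $J$-symbol followed by a (possibly empty) sequence of blocks, where each block is a run of between $0$ and $k-1$ symbols from $I$ terminated by one symbol from $J$; symbolically $\mathbf{s} \in J \cdot (I^{\le k-1}J)^{*}$. Establishing that this factorization is a genuine bijection is the crux of the argument.

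Next I would transfer to generating functions by length. A single $J$-symbol is weighted by $bx$, a run of $0$ to $k-1$ symbols from $I$ by $\sum_{j=0}^{k-1}(ax)^j = \tfrac{1-(ax)^k}{1-ax}$, and hence one block by $\tfrac{(1-(ax)^k)bx}{1-ax}$. The sequence (geometric-series) construction then gives
\[
    \sum_{n\ge 0} u_{I,J}^{(k)}(n)\,x^n \;=\; \frac{bx}{\,1 - \frac{(1-(ax)^k)bx}{1-ax}\,}.
\]
Clearing the inner fraction turns the numerator into $bx(1-ax)$ and the denominator into $(1-ax)-bx(1-(ax)^k) = 1-(a+b)x + a^k b\,x^{k+1} = 1 - qx + |I|^k|J|\,x^{k+1}$, which is exactly~\eqref{equ: ogf of J*J}. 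I would record that the vanishing constant term matches $u_{I,J}^{(k)}(0)=0$ and cross-check $u_{I,J}^{(k)}(1)=b$ and $u_{I,J}^{(k)}(2)=b^2$ directly.

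Finally the recurrence drops out algebraically: multiplying the identity by $1-qx+a^k b\,x^{k+1}$ leaves $bx-abx^2$ on the right, so comparing the coefficient of $x^n$ for $n \ge 3$ (where the right-hand side contributes nothing) yields $u_{I,J}^{(k)}(n) = q\,u_{I,J}^{(k)}(n-1) - |I|^k|J|\,u_{I,J}^{(k)}(n-k-1)$, while the coefficients of $x^1$ and $x^2$, together with the convention $u_{I,J}^{(k)}(n)=0$ for $n\le 0$, recover the base values $|J|$ and $|J|^2$. The only genuinely delicate step is the bijectivity of the block factorization --- in particular verifying that ``$I^k$-free'' is exactly ``no $I$-run of length $k$'' and that the boundary conditions forbid any $I$-symbols before the first or after the last $J$-symbol; the remaining generating-function and coefficient manipulations are routine.
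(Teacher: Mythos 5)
Your proof is correct, and its overall strategy is the same as the paper's: find a unique combinatorial factorization, convert it to a rational generating function via the sequence construction, and clear denominators to read off the recurrence. The factorization itself, however, is genuinely different, and yours buys a real simplification. The paper factors each word of $\bigcup_{n\ge 1}U_{I,J}^{(k)}(n)$ into alternating \emph{maximal} runs: repetitions of (nonempty $J$-run, nonempty $I$-run of length at most $k-1$), capped by a final nonempty $J$-run. Since uniqueness of that decomposition forces the interior $I$-runs to be nonempty, it degenerates at $k=1$ (the allowed $I$-run lengths form an empty range), and the paper consequently splits its proof in two: the run decomposition for $k\ge 2$, then a separate verification for $k=1$ that $u_{I,J}^{(1)}(n)=|J|^n$ satisfies Eq.~\eqref{equ: recurrence relation on J*J} and that the right-hand side of Eq.~\eqref{equ: ogf of J*J} collapses to $|J|x/(1-|J|x)$. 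Your factorization $J\cdot\bigl(I^{\le k-1}J\bigr)^{*}$ instead anchors each block on a single terminating $J$-symbol and allows the $I$-run inside a block to be empty; uniqueness still holds because block boundaries are exactly the positions of the $J$-symbols, and nothing degenerates at $k=1$, so a single computation covers all $k\ge 1$ with no case split. Both routes land on the same rational function $|J|x(1-|I|x)/\bigl(1-qx+|I|^k|J|x^{k+1}\bigr)$, and your derivation of the recurrence and base cases by comparing coefficients after multiplying through by the denominator is identical to the paper's.
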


\begin{proof}
    We first show that Eq.s~\eqref{equ: ogf of J*J} and~\eqref{equ: recurrence relation on J*J} hold for $k\ge 2$, then we consider the remaining case $k=1$.
    Let $k \ge 2$.
    By definition, each codeword from $\bigcup_{n \ge 1}U_{I,J}^{(k)}(n) $ can be uniquely constructed by appending letters from the alphabet $\mathbb{Z}_q$ to the empty word as follows. 
    \begin{enumerate}
        \item [(1)] Repeat the following two steps for $i \ge 0$ times;
        \begin{enumerate}
            \item[(a)] append arbitrarily many consecutive letters from $J$;
            \item[(b)] append less than $k$ consecutive letters from $I$;
        \end{enumerate}
        \item[(2)] Append arbitrarily many consecutive letters from $J$.
    \end{enumerate}

    Therefore, we have 
    
    \[  
        \begin{aligned}
            \sum_{n=0}^{\infty} u_{I,J}^{(k)}(n) x^n  & =  \left \{  \sum_{i=0}^{\infty} \left[ (|J|x + |J|^2x^2 + \dots) (|I|x + |I|x^2 + \dots + |I|^{k-1}x^{k-1}) \right] ^i\right \}  \cdot  (|J|x + |J|^2x^2 + \dots) \\
            & = \frac{|J|x}{1 - |J|x} \sum_{i=0}^{\infty} \left[\left( \frac{|J|x}{1 - |J|x} \right) \left( \frac{|I|x - |I|^k x^k}{1 - |I|x} \right) \right] ^i \\
            & = \frac{\frac{|J|x}{1 - |J|x}}{1 - \left( \frac{|J|x}{1 - |J|x} \right) \left( \frac{|I|x - |I|^k x^k}{1 - |I|x} \right)} \\
            & = \frac{|J|x}{1 - |J|x} \cdot \frac{(1- |J|x) (1 - |I|x)}{(1- |J|x) (1 - |I|x) - |J|x (|I|x - |I|^k x^k)} \\
            & = \frac{|J|x (1 - |I|x)}{1 - (|I| + |J|)x + |I|^k|J|x^{k+1}}.
        \end{aligned}
    \]
    Then Eq.~\eqref{equ: ogf of J*J} follows from the relation that $|I| + |J| = q$. 

    Next we show Eq.~\eqref{equ: recurrence relation on J*J} holds when $k \ge 2$.
    For $n \le 2$, Eq.~\eqref{equ: recurrence relation on J*J} follows from Eq.~\eqref{equ: u_{I,J}}.
    For $n \ge 3$, multiplying $1 - qx + |I|^k|J|x^{k+1}$ to both sides of Eq.~\eqref{equ: ogf of J*J} and comparing the coefficient of $x^n$, we get Eq.~\eqref{equ: recurrence relation on J*J}.

    Finally, let $k = 1$, then we have
    \begin{equation} \label{equ: recurrence relation on J^*}
        u_{I,J}^{(1)}(n) = 
        \begin{cases}
            0  & \text{if } n \le 0, \\
            |J|^n & \text{if } n \ge 1.
        \end{cases}
    \end{equation}
    
    It follows that

    \[
        \sum_{i=0}^{\infty} u_{I,J}^{(1)}(i)x^i = |J|x + |J|^2x^2 + \dots = \frac{|J|x}{1 - |J|x}.   
    \]
    Clearly, Eq.~\eqref{equ: recurrence relation on J^*} coincides with Eq.~\eqref{equ: recurrence relation on J*J} if $k$ = 1 and $n \le 2$. For $n \ge 3$, note that Eq.~\eqref{equ: recurrence relation on J*J} yields $u_{I,J}^{(1)}(n) = |I|\left(u_{I,J}^{(1)}(n-1) - |J|u_{I,J}^{(1)}(n-2)\right) + |J|u_{I,J}^{(1)}(n-1) = |J|u_{I,J}^{(1)}(n-1)$. Hence, Eq.~\eqref{equ: recurrence relation on J^*} coincides with Eq.~\eqref{equ: recurrence relation on J*J} for all $n \ge 0$ if $k$ = 1.

    Setting $k=1$ in Eq.~\eqref{equ: ogf of J*J}, we have 
    \[
        \frac{|J|x(1 - |I|x)}{1 - qx + |I||J|x^2} = \frac{|J|x}{1 - |J|x}.    
    \]

    The proof is then completed.
\end{proof}

\begin{theorem}\label{thm: Phi chee}
    Let $S_{I,J}^{(k)}(n)$ be the code given by Construction~I' with $k \ge 1$. 
    Then we have

    \begin{equation} \label{equ: ogf of I^kJ*J}
        \sum_{n=0}^{\infty} |S_{I,J}^{(k)}(n)|x^n = \sum_{n=0}^{\infty} |I|^k u_{I,J}^{(k)}(n-k)x^n
              = \frac{|I|^{k}|J|x^{k+1}(1 - |I|x)}{1 - qx + |I|^k |J| x^{k+1}},
    \end{equation}
    
    and
    
    \begin{equation}\label{equ: recurrence relation of I^kJ*J}
        |S_{I,J}^{(k)}(n)| = 
        \begin{cases}
            0 & \text{if } n \le k, \\
            |I|^k|J|^{n-k}  & \text{if } n = k+1, k+2, \\
            q \cdot |S_{I,J}^{(k)}(n-1)| - |I|^k|J|\cdot |S_{I,J}^{(k)}(n-k- 1)| & \text{if } n \ge k + 3.
        \end{cases}
    \end{equation}
\end{theorem}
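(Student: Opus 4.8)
The plan is to reduce everything to Lemma~\ref{lemma: ofg of J*J} via the structural identity $S_{I,J}^{(k)}(n+k) = I^k \times U_{I,J}^{(k)}(n)$ already recorded just before the statement. Since a codeword in $S_{I,J}^{(k)}(n)$ is precisely a fixed-shape prefix from $I^k$ followed by an element of $U_{I,J}^{(k)}(n-k)$, and the Cartesian factor $I^k$ contributes a multiplicative $|I|^k$, I would first establish the termwise identity $|S_{I,J}^{(k)}(n)| = |I|^k u_{I,J}^{(k)}(n-k)$ for every $n$, where the convention $u_{I,J}^{(k)}(m) = 0$ for $m \le 0$ from~\eqref{equ: u_{I,J}} automatically absorbs the range $n \le k$. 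This single identity is exactly the first equality asserted in~\eqref{equ: ogf of I^kJ*J}.

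Next I would obtain the closed form by an index shift. Writing $m = n - k$ and using that $u_{I,J}^{(k)}(m)$ vanishes for $m \le 0$, the generating function $\sum_{n} |S_{I,J}^{(k)}(n)| x^n = |I|^k \sum_{n} u_{I,J}^{(k)}(n-k) x^n$ becomes $|I|^k x^k \sum_{m \ge 0} u_{I,J}^{(k)}(m) x^m$. Substituting the closed form~\eqref{equ: ogf of J*J} supplied by Lemma~\ref{lemma: ofg of J*J} and collecting the factor $x^k$ against the numerator $|J|x(1-|I|x)$ yields $|I|^k|J|x^{k+1}(1-|I|x)/(1 - qx + |I|^k|J|x^{k+1})$, which is the right-hand side of~\eqref{equ: ogf of I^kJ*J}.

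Finally, for the recurrence~\eqref{equ: recurrence relation of I^kJ*J} I would transport the recurrence~\eqref{equ: recurrence relation on J*J} for $u_{I,J}^{(k)}$ through the same identity $|S_{I,J}^{(k)}(n)| = |I|^k u_{I,J}^{(k)}(n-k)$. The base cases $n \le k$ and $n = k+1, k+2$ follow at once from $u_{I,J}^{(k)}(0) = 0$, $u_{I,J}^{(k)}(1) = |J|$, and $u_{I,J}^{(k)}(2) = |J|^2$ after multiplying by $|I|^k$. For $n \ge k+3$, which corresponds to $n-k \ge 3$, I would multiply the three-term relation $u_{I,J}^{(k)}(n-k) = q\, u_{I,J}^{(k)}(n-k-1) - |I|^k|J|\, u_{I,J}^{(k)}(n-2k-1)$ by $|I|^k$ and reinterpret each term, using that $|I|^k u_{I,J}^{(k)}(n-k-1) = |S_{I,J}^{(k)}(n-1)|$ and $|I|^k u_{I,J}^{(k)}(n-2k-1) = |S_{I,J}^{(k)}(n-k-1)|$.

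As for the main obstacle: because Lemma~\ref{lemma: ofg of J*J} does the combinatorial heavy lifting, this argument is essentially bookkeeping, and the only place demanding real care is keeping the index shifts and the convention $u_{I,J}^{(k)}(m)=0$ for $m \le 0$ consistently aligned. In particular, I expect the delicate point to be verifying that the substitution $m \mapsto n-k$ sends the three arguments of the $u$-recurrence onto exactly the arguments $n-1$ and $n-k-1$ appearing in the $S$-recurrence, and confirming that $n = k+3$ is indeed the smallest index at which the three-term relation first governs the sequence.
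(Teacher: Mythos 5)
Your proposal is correct and follows exactly the paper's route: the paper's own proof is a one-line reduction citing the identity $S_{I,J}^{(k)}(n) = I^k \times U_{I,J}^{(k)}(n-k)$ together with Lemma~\ref{lemma: ofg of J*J}, and your argument is just that reduction with the index-shift and recurrence bookkeeping written out in full (all of which checks out, including the alignment $n \ge k+3 \Leftrightarrow n-k \ge 3$ and the mapping of arguments $n-k-1 \mapsto n-1$ and $n-2k-1 \mapsto n-k-1$).
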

\begin{proof}
    From the fact that $S_{I,J}^{(k)}(n) = I^k \times U_{I,J}^{(k)}(n-k)$ and Lemma~\ref{lemma: ofg of J*J}, the conclusion follows.
\end{proof}

We remark that Eq.~\eqref{equ: recurrence relation of I^kJ*J} has a direct combinatorial interpretation.
Let $n \ge k+3$. 
Define $P \subseteq \mathbb{Z}_q^n$ such that for any $\mathbf{p} = (p_1,p_2,\dots,p_n) \in P$, $(p_1,p_2,\dots, p_{n-2},p_n) \in S_{I,J}^{(k)}(n-1)$ and $p_{n-1} \in \mathbb{Z}_q$. 
In other words, $P$ is constructed by inserting an arbitrary $q$-ary letter in the second last position for each codeword in $S_{I,J}^{(k)}(n-1)$.
Let $Q = S_{I,J}^{(k)}(n)$. 
Let $T = S_{I,J}^{(k)}(n-k-1)\times I^k \times J$.
Note that $Q \subseteq P, P \setminus Q = T$, $|P| = q \cdot |S_{I,J}^{(k)}(n-1)|, |Q| = |S_{I,J}^{(k)}(n)|$, and $|T| = |I|^k|J|\cdot |S_{I,J}^{(k)}(n-k- 1)|$. Eq.~\eqref{equ: recurrence relation of I^kJ*J} then follows from $|P| - |Q| = |T|$.

Chee \textit{et al}.~\cite{chee2013cross} determined $|S_q^{(k)}(n)| $ by giving a recurrence relation of the set $S_q^{(k)}(n)$ in $k$ terms of $S_q^{(k)}(n-l)$ for $1\le l \le k$. 
Hence we give a simpler recurrence of $S_q^{(k)}(n)$ involving only two terms: $S_q^{(k)}(n-1)$ and $S_q^{(k)}(n-k-1)$. 
The generating function approach is new and thereby provides us a more transparent understanding of this classic construction.
It can be verified that Theorem~\ref{thm: Phi chee} coincides with Chee \textit{et al}'s result~\cite[Corollary~3.1]{chee2013cross} given that $I = \{0\}$ and $J = [q-1]$.

In the following, we estimate the asymptotic behavior of the cardinality of $S_{I,J}^{(k)}(n)$ when $|I| = |J|$.
Clearly, $|S_{I,J}^{(1)}(n)| = (q/2)^n$. 
For $k > 1$, we give the following theorem.
\begin{theorem}\label{thm: upper bound of ConstructionI}
    Let $S_{I,J}^{(k)}(n)$ be the code given by Construction~I' with $|I| = |J|$ and $k > 1$, we have
    \begin{equation*}\label{equ: Phi classic size}
        \limsup_{n \to \infty} |S_{I,J}^{(k)}(n)| ^{1/n} =  q(1 - \epsilon_k),
    \end{equation*}
    where $\epsilon_k$ is given in Eq.~\eqref{equ: epsilon_k}.
\end{theorem}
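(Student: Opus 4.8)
The plan is to read off the exponential growth rate of the sequence $a_n := |S_{I,J}^{(k)}(n)|$ directly from its rational generating function. By Theorem~\ref{thm: Phi chee} together with the hypothesis $|I| = |J| = q/2$, we have
\[
\sum_{n=0}^{\infty} a_n x^n = \frac{(q/2)^{k+1} x^{k+1}\bigl(1 - (q/2)x\bigr)}{1 - qx + (q/2)^{k+1} x^{k+1}}.
\]
Since the $a_n$ are nonnegative, the Cauchy--Hadamard theorem gives $\limsup_{n\to\infty} a_n^{1/n} = 1/R$, where $R$ is the radius of convergence of this power series. For a rational function, $R$ equals the modulus of the pole of smallest modulus once the fraction is written in lowest terms. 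So the whole argument reduces to locating the smallest-modulus root of the denominator that survives after cancellation with the numerator.

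Next I would perform two substitutions to match the denominator against Lemma~\ref{lemma: root of denominator}. Setting $z = (q/2)x$ turns the denominator into $1 - 2z + z^{k+1}$, and the reciprocal substitution $z = 1/y$ (after clearing $y^{k+1}$) turns this into $y^{k+1} - 2y^k + 1$. A short factorization shows $y^{k+1} - 2y^k + 1 = (y-1)\bigl(y^k - \sum_{i=0}^{k-1} y^i\bigr)$, so the roots of the denominator, viewed in the variable $y$, are $y = 1$ together with the roots of the polynomial appearing in Lemma~\ref{lemma: root of denominator}. The crucial bookkeeping is that the numerator factor $1 - (q/2)x$ vanishes exactly at $x = 2/q$, i.e.\ at $z = 1$, i.e.\ at $y = 1$; hence the spurious root $y = 1$ is cancelled and the surviving poles correspond precisely to the roots of $y^k - \sum_{i=0}^{k-1} y^i = 0$.

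I would then invoke Lemma~\ref{lemma: root of denominator}: for $k > 1$ its positive real root is $y_0 = 2(1-\epsilon_k) \in (1,2)$, while every other root (real or complex) has modulus strictly less than $1$. Under $z = 1/y$, the smallest-modulus surviving pole in $z$ is therefore $z^* = 1/y_0$ (the reciprocal of the largest-modulus $y$-root), and undoing $z = (q/2)x$ gives the dominant pole $x^* = (2/q)/y_0 = 1/\bigl(q(1-\epsilon_k)\bigr)$. Thus $R = |x^*| = 1/\bigl(q(1-\epsilon_k)\bigr)$ and $\limsup_{n\to\infty} a_n^{1/n} = 1/R = q(1-\epsilon_k)$, as claimed.

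The main obstacle is the cancellation bookkeeping: one must verify that the factor $1 - (q/2)x$ in the numerator removes exactly the root $y = 1$ of the denominator and nothing else, and that after this removal $y_0$ is genuinely the largest-modulus root of $y^k - \sum_{i=0}^{k-1} y^i$, so that $1/y_0$ is the unique smallest-modulus pole. This is where the modulus bounds in Lemma~\ref{lemma: root of denominator} (all non-$y_0$ roots lie inside the unit circle, hence their reciprocals lie outside) do the real work. It is also worth recording that $\epsilon_k < 2^{-k} < 1/2$ for $k > 1$, which guarantees $x^* \neq 2/q$ and confirms that no further cancellation occurs at the dominant pole. Finally, since only a $\limsup$ is asserted, Cauchy--Hadamard suffices and one need not separately establish that the limit exists.
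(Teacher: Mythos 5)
Your proof is correct and follows essentially the same route as the paper's: the generating function of Theorem~\ref{thm: Phi chee} specialized to $|I|=|J|=q/2$, cancellation of the numerator factor $1-(q/2)x$ via the identity $1-2z+z^{k+1}=(1-z)\bigl(1-\sum_{i=1}^{k}z^i\bigr)$, passage to the reciprocal polynomial, and Lemma~\ref{lemma: root of denominator} combined with Cauchy--Hadamard. The single point of divergence is how the dominant singularity is certified: the paper invokes Pringsheim's theorem (nonnegative coefficients force the radius of convergence to be a positive real root of the reduced denominator) together with the uniqueness of that positive real root, whereas you use the nearest-pole characterization of rational functions plus the modulus bounds in Lemma~\ref{lemma: root of denominator} (every root other than $y_0$ lies strictly inside the unit circle) to see that $1/y_0$ is the smallest-modulus pole. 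Both justifications are sound; yours trades Pringsheim for the root-location part of the lemma, which the paper's proof never needs, and your explicit check that no further cancellation occurs at the dominant pole is a detail the paper leaves implicit.
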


\begin{proof}
    It is well known in complex analysis (see~\cite[Theorem 2.5]{stein2010complex}, for example) that for a power series $f = \sum_{n \ge 0} a_n x^n$, the limit superior of the sequence $\{|a_n|\}_n^{1/n}$ is $R^{-1}$ as $n \to \infty$, where $R$ is the radius of convergence of $f$. 

    Define $F(x) = \sum_{n=0}^{\infty} |S_{I,J}^{(k)}(n)|x^n$. 
    By Eq.~\eqref{equ: ogf of I^kJ*J}, we have
    
    \[
        F(x) = \frac{(rx)^{k+1}(1-rx)}{1 - 2rx + (rx)^{k+1}} = \frac{(rx)^{k+1}}{1 - \sum_{i=1}^{k}(rx)^i},
    \]
    where $r = q/2$.

    The coefficient of $F(x)$ are all non-negative and by Pringsheim's Theorem (see~\cite[Theorem.IV.6]{flajolet2009analytic}, for example),
    the radius of convergence of $F(x)$ (denoted by $R$) is the smallest real root of the denominator.
    Let $f(x) = 1 - \sum_{i=1}^k(rx)^i$, then $f(R)=0$ and $0 < R < 1/r$ since $f(0) > 0$ and $f(1/r) < 0$. 
    Let $y = rx$, and $f(x) = z(y) = 1 - \sum_{i=1}^{k}y^i$.
    The reciprocal polynomial of $z(y)$ is 
    
    \[
        z^*(y) = y^k - y^{k-1} - \dots - y - 1.
    \]
    
    By Lemma~\ref{lemma: root of denominator}, $z^*(2(1 - \epsilon_k)) = 0$, and $\left[ 2(1-\epsilon_k)\right]^{-1}$ is the unique positive real root of $z(y)$. Therefore $R = \left[ 2r(1-\epsilon_k)\right]^{-1}$ and we have
    
    \[
        \limsup_{n \to \infty} |S_{I,J}^{(k)}(n)|^{1/n} = R^{-1} = q(1 - \epsilon_k),
    \]    
    where $\epsilon_k$ is given in Eq.~\eqref{equ: epsilon_k}.
    The proof is then completed.
\end{proof}

By Theorem~\ref{thm: upper bound of ConstructionI}, we know that, for $|I| = |J|$, $k>1$ and all fixed $\epsilon >0$, there exists an integer $N$ such that $|S_{I,J}^{(k)}(n)| < \left[ q(1 - \epsilon_k) + \epsilon \right]^n$ holds for all $n > N$, and $|S_{I,J}^{(k)}(n)| > \left[ q(1 - \epsilon_k) - \epsilon \right]^n$ holds for infinitely many values of $n$.

Recall in Remark~\ref{remark: phi classic} that Construction~I' generates largest non-overlapping code when $n$ divides $q$. 
We further note that the code by Construction~I' is close to Levenshtein's bound (see Eq.~\eqref{equ: lev bound}) even if $n$ does not divide $q$.
Let $S_{I,J}^{(1)}(n) = I \times J^{n-1}$ with $|I| = \lfloor q/n \rfloor$ be the code by Construction~I'.
Then by $|J| = q - |I| \ge q (\frac{n-1}{n})$, we have 
\[
|S_{I,J}^{I}(n)| = \frac{1}{n} \left( \frac{n-1}{n} \right)^{n-1}q^{n} - O(q^{n-1}).
\]

\subsection{The second construction} \label{sec: varianle-length non-overlapping codes}
Before we give the generalization on variable-length non-overlapping codes, we first review the binary construction by Bilotta~\cite{bilotta2017variable}.

\begin{construction_Bilotta}[Bilotta~\cite{bilotta2017variable}]\label{cons: Bilotta}
    Let $n, k$ be integer such that $3 \le k \le \lfloor n/2 \rfloor -2$. 
    Denote by $V_2^{(k)}(n)$ the set of all binary codewords $\mathbf{v} = (v_1,v_2,\dots,v_n)$ with length $n$ such that 
    \begin{itemize}
        \item $v_1  = v_2 = \dots = v_k = 1$, and $v_{k+1} = 0$, $v_{n-k} = 1$;
        \item $v_{n-k+1}, v_{n-k+2}, \dots, v_n = 0$;
        \item $(v_{k+1}, v_{k+2}, \dots, v_{n-k})$ does not contain $k$ consecutive $0$'s or $k$ consecutive $1$'s.
    \end{itemize}
    Define 
    \[
        \mathcal{V}_2^{(k)}(n)  = \bigcup_{i \ge 2k+2}^{n} V_2^{(k)}(i).
    \]
    Then $\mathcal{V}_2^{(k)}(n)$ is a binary variable-length non-overlapping code with maximum length $n$.
\end{construction_Bilotta}

\begin{example} Take $k=3$ and $n=10$, then Construction~II gives the code as
    \[
        \mathcal{V}_2^{(k)}(n) = \{11101000, 111011000, 111001000, 1110101000, 1110011000\}.    
    \]
\end{example}
By the generic method, we now generalize Construction~II by $\Phi_{I,J} (\mathcal{V}_2^{(k)}(n))$. 
It is readily seen that $\Phi_{I,J} (\mathcal{V}_2^{(k)}(n))$ is equivalent to the code by following construction.

\begin{construction_Phi_Bilotta}\label{cons: Phi_Bilotta}
    Let $I,J$ form a bipartition of $\mathbb{Z}_q$ with $q \ge 2$. 
    Let $n,k$ be integers such that $3 \le k \le \lfloor n/2 \rfloor -2$.
    Denote by $V_{I,J}^{(k)}(n)$ the set of all $q$-ary codewords $\mathbf{v} = (v_1,v_2,\dots,v_n)$ with length $n$ such that 
    \begin{itemize}
        \item $(v_1,v_2,\dots, v_k) \in J^k$, and $v_{k+1} \in I$, $v_{n-k} \in J$;
        \item $(v_{n-k+1}, v_{n-k+2}, \dots, v_n) \in I^k$;
        \item $(v_{k+1}, v_{k+2}, \dots, v_{n-k})$ is $(I^k \cup J^k)$-free.
    \end{itemize}
    In other words, $V_{I,J}^{(q)}(n) = J^k \times R_{I,J}^{(k)}(n - 2k) \times I^k$, where $R_{I,J}^{(k)}(l)$ is the set of $q$-ary $(I^k \cup J^k)$-free codewords of length $l$, that start with an element from $I$, and end with an element from $J$.
    Note that $R_{I,J}^{(k)}(0) = \emptyset$, $R_{I,J}^{(k)}(1) = \emptyset$, and $R_{I,J}^{(k)}(2) = I \times J$.
    Define 
    \[
    \mathcal{V}_{I,J}^{(k)}(n)  = \bigcup_{i \ge 2k+2}^{n} V_{I,J}^{(k)}(i).
    \]
    Then by Lemma~\ref{corollary: binary non-overlapping to q-ary non-overlapping}, $\mathcal{V}_{I,J}^{(k)}(n)$ is a $q$-ary variable-length non-overlapping code with maximum length $n$.
\end{construction_Phi_Bilotta}

\begin{remark}
    Construction~II can be viewed as a special case of Construction~II'.
    More precisely, taking $I = \{0\}$ and $J = \{1\}$ in Construction~II', we have $\mathcal{V}_{\{0\},\{1\}}^{(k)}(n) = \mathcal{V}_2^{(k)}(n)$.
\end{remark}

In the rest of this subsection, we again analyze the cardinality of codes by Construction~II' using a generating function approach.
The cardinality is related to the intermediate variable $r_{I,J}^{(k)}(l)$ defined as follows.
Let $k \ge 2$, define

\begin{equation}\label{equ: r_{I,J}}
    r_{I,J}^{(k)}(l) = \begin{cases}
        0 & \text{if } l < 0, \\
        1 & \text{if } l=0, \\
        |R_{I,J}^{(k)}(l)| & \text{if } l > 0, \text{ where $R_{I,J}^{(k)}(l)$ is defined in Construction~II'.}
    \end{cases}
\end{equation}

By Construction~II', for $k \ge 3$ and $n \ge 2k+2$, 

\begin{equation}\label{equ: size of variable-length code No.1}
    |\mathcal{V}_{I,J}^{(k)}(n)| = \sum_{i \ge 2k+2}^{n}  |V_{I,J}^{(k)}(i)|,
\end{equation}

and for $ i \ge 2k + 2$,

\begin{equation}\label{equ: size of subcode of variable-length code No.1}
    |V_{I,J}^{(k)}(i)| = |I|^k |J|^k r_{I,J}^{(k)}(i - 2k),
\end{equation}

It remains to determine the values of $r_{I,J}^{(k)}(\cdot)$.
For $I = \{0\}$ and $J = \{1\}$, it was solved by Barcucci, Bernini, Bilotta, and Pinzani~\cite{barcucci2017non} as follows. 
Firstly, they showed 
\begin{equation*} \label{equ: Barcucci recursive on r}
    r_{\{0\},\{1\}}^{(k)}(n) = \sum_{j=1}^{k}r_{\{0\},\{1\}}^{(k)}(n - j) - \sum_{j=k+1}^{2k-1}r_{\{0\},\{1\}}^{(k)}(n - j),
\end{equation*}
by giving a recurrence relation of the set $R_{\{0\},\{1\}}^{(k)}(n)$ in $2k-1$ terms of $R_{\{0\},\{1\}}^{(k)}(n-j)$ for $1 \le j \le 2k-1$. 
Then, they proved by induction that $r_{\{0\},\{1\}}^{(k)}(n)$ is equal to or differ by $1$ the sum of the $k-1$ preceding terms.
More precisely, for $k \ge 3$, $n \ge 2$, it was proved that 

\begin{equation}\label{equ: Barcucci induction on r}
    r_{\{0\},\{1\}}^{(k)}(n) = \sum_{j=1}^{k-1}r_{\{0\},\{1\}}^{(k)}(n-j) + d^{(k)}(n),
\end{equation}

where $d^{(k)}(n)$ is defined as
\[
    d^{(k)}(n) = \begin{cases}
        1 & \text{if } n \equiv 0 \pmod k , \\
        -1 & \text{if } n \equiv 1 \pmod k, \\
        0 &  \text{otherwise.}
    \end{cases}
\]

Finally, by the generating functions of $d^{(k)}(n)$ and $k$-generalized Fibonacci numbers (see~\cite[Proposition 3.2]{barcucci2017non}), the following was given. 

\begin{equation}\label{equ: Barcucci generating function on r}
    \sum_{n=0}^{\infty} r_{\{0\},\{1\}}^{(k)}(n)x^n  = \frac{1 - 2x + x^2}{1 - 2x + 2x^{k+1} - x^{2k}}.
\end{equation}
\begin{remark}
    In~\cite{barcucci2017non}, it is claimed that
    \[
        \sum_{n=0}^{\infty} r_{\{0\},\{1\}}^{(k)}(n)x^n = \frac{x^2 -2x^{k+1} + x^{2k}}{1 - 2x + 2x^{k+1} - x^{2k}} =  \frac{1 - 2x + x^2}{1 - 2x + 2x^{k+1} - x^{2k}} - 1.
    \]
    Taking $x=0$, it yields that $r_{\{0\},\{1\}}^{(k)}(0) = 0$, which should be $1$. 
    We correct this equation in Eq.~\eqref{equ: Barcucci generating function on r}.
\end{remark}

However, Eq.~\eqref{equ: Barcucci induction on r} cannot be easily generalized for other $I,J$.
Instead of trying to follow  Barcucci \textit{et al.}'s argument, we find a simple way to determine the generating function of $r_{I,J}^{(k)}(n)$ directly.

\begin{lemma}\label{lemma:  ogf of r_{I,J}}
    Let $I,J$ form a bipartition of $\mathbb{Z}_q$. For any integer $n, k,q$ larger than $1$, we have 
    \begin{equation}\label{equ: ogf of r_{I,J}}
        \sum_{n=0}^{\infty} r_{I,J}^{(k)}(n)x^n = \frac{|I||J|x^2 - qx + 1}{1 - qx + (|I|^k|J| + |I||J|^k)x^{k+1} - |I|^k|J|^k x^{2k}},
    \end{equation}
    Therefore, $r_{I,J}^{(k)}(n)$ is given as follows.
    \begin{equation}\label{equ: recursive realation on r_{I,J}}
        r_{I,J}^{(k)}(n) = 
        \begin{cases}
            0 & \text{if } n < 0 \text{ or } n = 1,\\
            1 & \text{if } n = 0, \\
            |I||J| & \text{if } n=2, \\
            q r_{I,J}^{(k)}(n-1) - (|I|^k|J| + |I||J|^k) r_{I,J}^{(k)}(n - k - 1) + |I|^k|J|^k r_{I,J}^{(k)}(n - 2k) & \text{if } n \ge 3.
        \end{cases}
    \end{equation}
\end{lemma}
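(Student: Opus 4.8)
The plan is to mirror the symbolic-dynamics / transfer-matrix decomposition already used for $u_{I,J}^{(k)}(n)$ in Lemma~\ref{lemma: ofg of J*J}, but now tracking words that avoid \emph{both} forbidden blocks $I^k$ and $J^k$. The key object is a word in $R_{I,J}^{(k)}$: an $(I^k\cup J^k)$-free string beginning with a letter of $I$ and ending with a letter of $J$. The cleanest route is to describe how an arbitrary such word is built by concatenating maximal runs of identical type. Reading from left to right, a word of this kind decomposes uniquely into alternating \emph{blocks}, where each block is a maximal run of $I$-letters or of $J$-letters; the freeness condition forces each run to have length between $1$ and $k-1$, and the boundary conditions force the first block to be an $I$-run and the last to be a $J$-run. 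I would encode the two run-types by the factor $\alpha(x)=\sum_{j=1}^{k-1}|I|^j x^j=\dfrac{|I|x-|I|^kx^k}{1-|I|x}$ and $\beta(x)=\sum_{j=1}^{k-1}|J|^j x^j=\dfrac{|J|x-|J|^kx^k}{1-|J|x}$, and then assemble an alternating sequence that starts with an $I$-block and ends with a $J$-block.

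The main computational step is to sum this alternating-block structure into a rational generating function and verify it collapses to Eq.~\eqref{equ: ogf of r_{I,J}}. An alternating word starting with $I$ and ending with $J$ has the schematic form $(\text{$I$-block})\,(\text{$J$-block})\cdots$ with an equal number of $I$- and $J$-blocks, giving a contribution $\sum_{i\ge 1}\big(\alpha(x)\beta(x)\big)^i=\dfrac{\alpha(x)\beta(x)}{1-\alpha(x)\beta(x)}$, and I must reconcile this with the $n=0$ convention $r_{I,J}^{(k)}(0)=1$ set in Eq.~\eqref{equ: r_{I,J}} by adding the empty-word term $1$. Clearing the denominators $1-|I|x$ and $1-|J|x$ and simplifying, using $|I|+|J|=q$, should produce the claimed numerator $|I||J|x^2-qx+1$ over the denominator $1-qx+(|I|^k|J|+|I||J|^k)x^{k+1}-|I|^k|J|^k x^{2k}$. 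Once Eq.~\eqref{equ: ogf of r_{I,J}} is established, the recurrence Eq.~\eqref{equ: recursive realation on r_{I,J}} follows by the standard device: multiply both sides by the denominator and compare coefficients of $x^n$, reading off the initial values $r_{I,J}^{(k)}(0)=1$, $r_{I,J}^{(k)}(2)=|I||J|$, and $r_{I,J}^{(k)}(1)=0$ directly, and the three-term recurrence for $n\ge 3$ from the nonzero denominator coefficients at degrees $0,1,k+1,2k$.

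I expect the arithmetic bookkeeping in the algebraic simplification to be the main obstacle, rather than any conceptual difficulty. The delicate point is getting the boundary handling exactly right: because a word must start in $I$ and end in $J$, the block decomposition is forced to have matching numbers of $I$- and $J$-runs in a strict alternation, and one must be careful not to double-count or to admit spurious configurations where a run of length $\ge k$ sneaks in. A secondary subtlety is the treatment of small $n$ — the definition assigns $r_{I,J}^{(k)}(0)=1$, $R_{I,J}^{(k)}(1)=\emptyset$, and $R_{I,J}^{(k)}(2)=I\times J$, so I would cross-check the first few coefficients of the closed form against these stipulated values to confirm the convention is absorbed correctly. The condition $k\ge 2$ guarantees the run lengths $1,\dots,k-1$ form a nonempty range, which is exactly what keeps $\alpha(x)$ and $\beta(x)$ nontrivial; the analysis of the $k=1$ degenerate case is not needed here since Construction~II' requires $k\ge 3$.
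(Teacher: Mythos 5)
Your proposal is correct and follows essentially the same route as the paper: the paper also decomposes each word of $\bigcup_{n\ge 1}R_{I,J}^{(k)}(n)$ into alternating $I$-runs and $J$-runs of length at most $k-1$, sums the resulting geometric series $\sum_{i\ge 0}\left(\frac{|I|x-|I|^kx^k}{1-|I|x}\cdot\frac{|J|x-|J|^kx^k}{1-|J|x}\right)^i$ (your ``$1+\sum_{i\ge 1}$'' is the same series with the empty-word term written separately), simplifies using $|I|+|J|=q$, and then obtains the recurrence by multiplying through by the denominator and comparing coefficients. The boundary points you flag (strict alternation, the $n=0$ convention, checking small coefficients) are exactly the details the paper's computation absorbs implicitly, so your plan fills them in correctly rather than diverging from it.
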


\begin{proof}
    As in the proof of Lemma~\ref{lemma: ofg of J*J}, each codeword from $\bigcup_{n \ge 1}R_{I,J}^{(k)}(n)$ can be uniquely constructed by appending $q$-ary letters to the empty word as follows.

    \begin{itemize}
        \item[] Repeat the following two steps for $i \ge 0$ times.
        \begin{enumerate}
            \item[(a)] append less than $k$ consecutive letters from $I$;
            \item[(b)]  append less than $k$ consecutive letters from $J$;
        \end{enumerate}
    \end{itemize}

    Hence, we have
    \[
        \begin{aligned}
            \sum_{n=0}^{\infty} r_{I,J}^{(k)}(n)x^n & = \sum_{i=0}^{\infty} \left[ (|I|x + |I|^2x^2 + \dots + |I|^{k-1}x^{k-1})(|J|x + |J|^2x^2 + \dots + |J|^{k-1}x^{k-1}) \right]^i \\
            & = \sum_{i=0}^{\infty} \left( \frac{|I|x - |I|^kx^k}{1 - |I|x} \cdot \frac{|J|x - |J|^kx^k}{1 - |J|x} \right)^i \\
            & = \frac{1}{1 - \left( \frac{|I|x - |I|^kx^k}{1 - |I|x} \cdot \frac{|J|x - |J|^kx^k}{1 - |J|x} \right)} \\
            & = \frac{|I||J|x^2 - (|I| + |J|)x + 1}{1 - (|I| + |J|)x + (|I|^k|J| + |I||J|^k)x^{k+1} - |I|^k|J|^k x^{2k}}.
        \end{aligned}
    \]
Then Eq.~\eqref{equ: ogf of r_{I,J}} follows from the relation that $|I| + |J| = q$.

Note that Eq.~\eqref{equ: recursive realation on r_{I,J}} holds for $n \le 2$ by Eq.~\eqref{equ: r_{I,J}}. 
Multiplying the denominator to both sides of Eq.~\eqref{equ: ogf of r_{I,J}} and comparing the coefficient of $x^n$, we get the rest part of Eq.~\eqref{equ: recursive realation on r_{I,J}}.
\end{proof}

Setting $I = \{0\}$, $J = \{1\}$ in Eq.~\eqref{equ: ogf of r_{I,J}}, we reobtain Eq.~\eqref{equ: Barcucci generating function on r}.
Moreover, Eq.~\eqref{equ: recursive realation on r_{I,J}} gives a simpler recurrence relation on $r_{\{0\}, \{1\}}^{(k)}(n)$.

\begin{theorem}\label{thm: variable-length}
    Let $\mathcal{V}_{I,J}^{k}(n)$ be the code given by Construction~II' with $3 \le k \le \lfloor n/2 \rfloor -2$.
    We have
    \begin{equation}\label{equ: ogf of variable-length code No.1}
        \sum_{n=2k+2}^{\infty} |\mathcal{V}_{I,J}^{(k)}(n)|x^n = \frac{|I|^k |J|^k x^{2k}}{1-x} \left( \frac{|I||J|x^2 - qx + 1}{1 - qx + (|I|^k|J| + |I||J|^k)x^{k+1} - |I|^k|J|^k x^{2k}} -1 \right).
    \end{equation}
\end{theorem}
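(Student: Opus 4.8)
The plan is to assemble the target series in Eq.~\eqref{equ: ogf of variable-length code No.1} from the two structural identities Eq.~\eqref{equ: size of variable-length code No.1} and Eq.~\eqref{equ: size of subcode of variable-length code No.1}, together with the closed form for the generating function of $r_{I,J}^{(k)}$ already established in Eq.~\eqref{equ: ogf of r_{I,J}}. Write $R(x) = \sum_{n \ge 0} r_{I,J}^{(k)}(n) x^n$ for that known series. The whole argument is a short chain of generating-function manipulations; no new combinatorial input is required, since all the structural work lives in the preceding lemma.

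First I would pass from $r_{I,J}^{(k)}$ to the subcode sizes $|V_{I,J}^{(k)}(i)|$. By Eq.~\eqref{equ: size of subcode of variable-length code No.1}, for $i \ge 2k+2$ we have $|V_{I,J}^{(k)}(i)| = |I|^k|J|^k r_{I,J}^{(k)}(i-2k)$, so after the substitution $m = i - 2k$,
\[
\sum_{i=2k+2}^{\infty} |V_{I,J}^{(k)}(i)| x^i = |I|^k|J|^k x^{2k} \sum_{m=2}^{\infty} r_{I,J}^{(k)}(m) x^m.
\]
Using the initial values $r_{I,J}^{(k)}(0) = 1$ and $r_{I,J}^{(k)}(1) = 0$ recorded in Eq.~\eqref{equ: r_{I,J}}, the tail sum is exactly $R(x) - 1$, so the left-hand side equals $|I|^k|J|^k x^{2k}\bigl(R(x)-1\bigr)$.

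Next I would convert the partial sums in Eq.~\eqref{equ: size of variable-length code No.1} into a full generating function. Extending $|V_{I,J}^{(k)}(i)|$ by zero for $i < 2k+2$, the quantity $|\mathcal{V}_{I,J}^{(k)}(n)| = \sum_{i=2k+2}^{n} |V_{I,J}^{(k)}(i)|$ is precisely the sequence of partial sums of $\{|V_{I,J}^{(k)}(i)|\}_i$. Since multiplying a generating function by $(1-x)^{-1}$ yields the generating function of its partial sums,
\[
\sum_{n=2k+2}^{\infty} |\mathcal{V}_{I,J}^{(k)}(n)| x^n = \frac{1}{1-x}\sum_{i=2k+2}^{\infty} |V_{I,J}^{(k)}(i)| x^i = \frac{|I|^k|J|^k x^{2k}}{1-x}\bigl(R(x)-1\bigr).
\]
Substituting the closed form of $R(x)$ from Eq.~\eqref{equ: ogf of r_{I,J}} into this last expression then yields Eq.~\eqref{equ: ogf of variable-length code No.1} verbatim.

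The individual steps are routine; the only places demanding care are the bookkeeping of the index shift $m = i - 2k$ together with the correct extraction of the low-order terms $r_{I,J}^{(k)}(0)=1$ and $r_{I,J}^{(k)}(1)=0$ (so that the tail is $R(x)-1$ and not $R(x)$ or $R(x)-1-|I||J|x^2$), and the justification that multiplication by $(1-x)^{-1}$ realizes the partial-sum operation with the summation correctly starting at the intended index $2k+2$. These are the main, if modest, obstacles; once they are handled the final substitution is purely mechanical and no simplification of the rational expression is even needed, since the statement leaves $R(x)-1$ unexpanded.
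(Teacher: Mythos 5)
Your proposal is correct and follows essentially the same route as the paper: express $\sum_{i \ge 2k+2}|V_{I,J}^{(k)}(i)|x^i$ via Eq.~\eqref{equ: size of subcode of variable-length code No.1}, shift indices to get $|I|^k|J|^k x^{2k}\sum_{m\ge 2} r_{I,J}^{(k)}(m)x^m = |I|^k|J|^k x^{2k}\bigl(R(x)-1\bigr)$ using Eq.~\eqref{equ: ogf of r_{I,J}}, and then pass to $|\mathcal{V}_{I,J}^{(k)}(n)|$ through the partial-sum identity of Eq.~\eqref{equ: size of variable-length code No.1}. The only difference is cosmetic: you spell out the $(1-x)^{-1}$ partial-sum step explicitly, whereas the paper leaves it implicit in its final sentence.
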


\begin{proof}
By Eq.s~\eqref{equ: size of subcode of variable-length code No.1} and~\eqref{equ: ogf of r_{I,J}}, we have 

\begin{equation*} \label{equ: ofg of subcode of variable-length code No.1}
    \begin{aligned}
        \sum_{i=2k + 2}^{\infty} |V_{I,J}^{(k)}(i)|x^i & = \sum_{i=2k+2}^{\infty}|I|^k |J|^k r_{I,J}^{(k)}(i-2k)x^i \\
                & = |I|^k |J|^k x^{2k} \sum_{n=2}^{\infty} r_{I,J}^{(k)}(n)x^n \\
                & = |I|^k |J|^k x^{2k} \left( \frac{|I||J|x^2 - qx + 1}{1 - qx + (|I|^k|J| + |I||J|^k)x^{k+1} - |I|^k|J|^k x^{2k}} -1 \right).
    \end{aligned}   
\end{equation*}

The desired result then follows from Eq.~\eqref{equ: size of variable-length code No.1}.
\end{proof}

In particular, by Theorem~\ref{thm: variable-length}, we have
\[
    \sum_{n=2k+2}^{\infty} |\mathcal{V}_{\{0\},\{1\}}^{(k)}(n)|x^n = \frac{x^{2k}(x - x^k)^2}{(1-x)(1-x^k)(1-2x+x^k)},
\]
which coincides with the generating function given in~\cite[Eq.~(9)]{bilotta2017variable}.
Moreover, Bilotta~\cite{bilotta2017variable} showed that
\[
    \limsup_{n \to \infty} |\mathcal{V}_{\{0\}, \{1\}}^{(k)}(n)|^{1/n} = 2(1-\epsilon_{k-1}) ,
\]
where $\epsilon_{k-1}$ is given in Eq.~\eqref{equ: epsilon_k}.
We generalize this result as follows.

\begin{theorem}
    Suppose that $n,k,q$ are integers larger than $1$ such that $8 \le 2k+2 \le n$. Let $I,J$ form a bipartition of $\mathbb{Z}_q$ such that $|I| = |J|$, then we have
    \begin{equation}\label{equ: asymptotic size of variable-length code No.1}
        \limsup_{n \to \infty} |\mathcal{V}_{I,J}^{(k)}(n)|^{1/n} =  q(1-\epsilon_{k-1}),
    \end{equation}
    where $\epsilon_{k-1}$ is given in Eq.~\eqref{equ: epsilon_k}.
\end{theorem}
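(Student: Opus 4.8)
The plan is to mirror the argument of Theorem~\ref{thm: upper bound of ConstructionI}: exhibit $\sum_n |\mathcal{V}_{I,J}^{(k)}(n)| x^n$ as a rational function, locate its dominant singularity, and invoke the standard fact (see~\cite[Theorem 2.5]{stein2010complex}) that $\limsup_{n\to\infty}|a_n|^{1/n}=R^{-1}$, where $R$ is the radius of convergence of $\sum_n a_n x^n$. First I would set $r=q/2$ so that $|I|=|J|=r$ and $q=2r$, and substitute into the generating function of Theorem~\ref{thm: variable-length}. I expect a drastic collapse after clearing denominators, driven by the factorization
\[
1-2rx+2r^{k+1}x^{k+1}-r^{2k}x^{2k}=(1-r^kx^k)(1-2rx+r^kx^k)
\]
together with the identity $(1-rx)^2-(1-r^kx^k)(1-2rx+r^kx^k)=(rx-r^kx^k)^2$, which reduces the bracketed term in Eq.~\eqref{equ: ogf of variable-length code No.1} to $(rx-r^kx^k)^2$ over the factored denominator. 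This should produce
\[
\sum_{n\ge 2k+2}|\mathcal{V}_{I,J}^{(k)}(n)|x^n=\frac{r^{2k+2}x^{2k+2}(1-r^{k-1}x^{k-1})^2}{(1-x)(1-r^kx^k)(1-2rx+r^kx^k)}.
\]

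Since the coefficients are non-negative, Pringsheim's Theorem (\cite[Theorem.IV.6]{flajolet2009analytic}) lets me take $R$ to be the smallest positive real singularity, so I only need to compare the real positive roots of the three denominator factors and confirm none is cancelled by the numerator. Writing $y=rx$, the factor $1-2rx+r^kx^k$ becomes $1-2y+y^k$, and using the obvious root $y=1$ I factor it as $(y-1)\bigl(y^{k-1}+\cdots+y-1\bigr)=-(y-1)\bigl(1-y-\cdots-y^{k-1}\bigr)$. The second factor is precisely the reciprocal polynomial underlying Lemma~\ref{lemma: root of denominator} with index $k-1$, whose unique smallest-modulus root is $y_*=[2(1-\epsilon_{k-1})]^{-1}\in(1/2,1)$, with $\epsilon_{k-1}$ as in Eq.~\eqref{equ: epsilon_k}; the modulus bounds of that lemma guarantee every remaining root of $1-2y+y^k$ (namely $y=1$ together with the reciprocals of the subdominant roots) has modulus at least $1$.

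Translating back, the candidate singularities are $x=y_*/r$, $x=1/r$ (arising from $y=1$ and from $1-r^kx^k$), and $x=1$; the first has the strictly smallest modulus $y_*/r$ because $y_*<1\le r$. Finally I would verify that $x=y_*/r$ survives as a genuine pole: the numerator there equals $r^{2k+2}(y_*/r)^{2k+2}(1-y_*^{k-1})^2\neq 0$ since $y_*\in(1/2,1)$, and the factors $(1-x)$ and $(1-r^kx^k)$ do not vanish there. Hence $R=y_*/r$ and $R^{-1}=r/y_*=2r(1-\epsilon_{k-1})=q(1-\epsilon_{k-1})$, as claimed.

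The main obstacle I anticipate is bookkeeping rather than conceptual: carrying out the cancellation that yields the simplified rational form, and then arguing simultaneously that the root $y_*$ furnished by Lemma~\ref{lemma: root of denominator} is the dominant singularity and that it is not annihilated by the numerator. In particular I must rule out the superficially smaller-looking candidate $x=1/r$, which in fact turns out to be a removable singularity once the numerator factor $(1-r^{k-1}x^{k-1})^2$ and the coincident vanishing of $(1-r^kx^k)$ and $(1-2rx+r^kx^k)$ at $y=1$ are taken into account.
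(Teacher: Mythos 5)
Your proof is correct and takes essentially the same route as the paper: set $r=q/2$, reduce the generating function of Theorem~\ref{thm: variable-length} to $\frac{(rx)^{2k}(rx-(rx)^k)^2}{(1-x)(1-(rx)^k)(1-2rx+(rx)^k)}$, factor $1-2rx+(rx)^k$ as $(1-rx)\bigl(1-\sum_{i=1}^{k-1}(rx)^i\bigr)$, and combine Pringsheim's Theorem with Lemma~\ref{lemma: root of denominator} (applied with index $k-1$ via the reciprocal polynomial) to get $R^{-1}=q(1-\epsilon_{k-1})$. Your extra checks---that the numerator is nonzero at $y_*/r$, that the coincident zeros at $rx=1$ are removable, and that all other denominator roots have larger modulus---are details the paper glosses over (it simply declares $R$ to be the smallest positive real root of the denominator, and its factor ``$1-\sum_{i=0}^{k-1}(rx)^i$'' is a typo for $1-\sum_{i=1}^{k-1}(rx)^i$), so your write-up is if anything more complete.
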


\begin{proof}
    The proof idea is similar to that of Theorem~\ref{thm: upper bound of ConstructionI}.
    Let $q = 2r$.
    By Eq.~\eqref{equ: ogf of variable-length code No.1}, we have
\[  
    \begin{aligned}
        \sum_{n=2k+2}^{\infty} |\mathcal{V}_{I,J}^{(k)}(n)|x^n & = \frac{(rx)^{2k}(rx - (rx)^k)^2}{(1-x)(1-(rx)^k)(1-2rx+(rx)^k)} \\
        & = \frac{(rx)^{2k}(rx - (rx)^k)^2}{(1-x)(1-(rx)^k)(1-rx)(1-\sum_{i=0}^{k-1}(rx)^i)}.
    \end{aligned}    
\]

By Pringsheim's Theorem, the covering radius $R$ of the series $\sum_{n=2k+2}^{\infty} |\mathcal{V}_{\{0\},\{1\}}^{(k)}(n)|x^n$ is the smallest positive real root of the denominator. By considering the reciprocal polynomial of $1-\sum_{i=0}^{k-1}(rx)^i$ and Lemma~\ref{lemma: root of denominator}, the unique real root of $1-\sum_{i=0}^{k-1}(rx)^i = 0$ is $\left[ 2p(1-\epsilon_{k-1})\right]^{-1} = \left[ q(1-\epsilon_{k-1})\right]^{-1}$, where $\epsilon_{k-1}$ is given in Eq.~\eqref{equ: epsilon_k}. Therefore $R = \left[ q(1-\epsilon_{k-1})  \right]^{-1}$ and 

\[
    \limsup_{n \to \infty} |\mathcal{V}_{I,J}^{(k)}(n)|^{1/n} = R^{-1} = q(1-\epsilon_{k-1}).
\]
The proof is then completed.
\end{proof}

For $|I| = |J|$ and all fixed $\epsilon >0$, there exists an integer $N$ such that $|\mathcal{V}_{I,J}^{(k)}(n)| < \left[  q(1-\epsilon_{k-1}) + \epsilon \right]^n$ holds for all $n>N$, and $|\mathcal{V}_{I,J}^{(k)}(n)| > \left[  q(1-\epsilon_{k-1}) -\epsilon \right]^n$ holds for infinitely many values of $n$.

Note that our generalization also works for other constructions for binary non-overlapping codes. 
Numerical results show that our generalization indeed produces large $q$-ary non-overlapping codes, especially when $q$ is large.
Details are presented in Appendix.

\section{Recursive upper bounds on variable-length non-overlapping codes} \label{sec: upper bounds on variable-length non-overlapping codes}

Throughout this section, denote by $\mathcal{J}_q(n) = \bigcup_{i = h}^n J_q(i)$ a $q$-ary non-overlapping code, and suppose that each subset $J_q(i)$ contains all the codewords in $\mathcal{J}_q(n)$ with length $i$. 
In particular, $\mathcal{J}_q(n)$ is a fixed-length non-overlapping code if $h=n$. 
Let $B_{\mathcal{J}_q(n)}(m)$ be the set of  $q$-ary words with length $m$ that do not contain codewords from $\mathcal{J}_q(n)$ as subwords. 
Define 
\begin{equation*}
    b_{\mathcal{J}_q(n)}(m) = 
    \begin{cases}
        0, & \text{if $m < 0$} \\
        1, & \text{if $m = 0$}  \\
        |B_{\mathcal{J}_q(n)}(m)|, & \text{if $m > 0$}
    \end{cases}
\end{equation*}
Recall that Bilotta~\cite{bilotta2017variable} showed a recursive upper bound on $J_2(n)$ that

\begin{equation*}
    |J_2(n)| < \frac{2^n}{n+1} - \sum_{i=1}^{h-2}2^i |J_2(n-i)| - \frac{1}{2}\sum_{i=h}^{n+1-h} b_{\mathcal{J}_2(n)}(i) \cdot |J_2(n+1-i)|,
\end{equation*}
where the exact expression of $b_{\mathcal{J}_2(n)}(i)$ is left open. 
In the following, we first give an exact expression of $b_{\mathcal{J}_q(n)}(m)$ with $q \ge 2$, and then generalize this recursive bound. 

\begin{theorem}
    Let notations be the same as above. We have
    \begin{equation}\label{equ: exact expression of B_m}
        \begin{aligned}
            b_{\mathcal{J}_q(n)}(m) & = 
            & \sum (-1)^{r} \frac{(t_1 + r)!}{t_1! t_h!\cdots t_n!} q^{t_1}|J_q(h)|^{t_h}|J_q(h+1)|^{t_{h+1}} \cdots |J_q(n)|^{t_n},
        \end{aligned}
    \end{equation}
    where $r = t_h + t_{h+1} + \dots + t_n$
    and the summation is over all nonnegative integers $t_1,t_h,\dots, t_n$ such that
    \[
        {t_1+ht_h + (h+1)t_{h+1} + \dots + nt_n = m \ge 0}.
    \]
\end{theorem}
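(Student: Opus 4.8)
The plan is to reduce the entire statement to a single generating-function identity, namely
\[
    B(x) \;:=\; \sum_{m \ge 0} b_{\mathcal{J}_q(n)}(m)\, x^m \;=\; \frac{1}{1 - qx + C(x)}, \qquad C(x) := \sum_{i=h}^{n} |J_q(i)|\, x^i,
\]
after which the displayed formula \eqref{equ: exact expression of B_m} is just a coefficient extraction. Indeed, writing the denominator as $1-\bigl(qx-C(x)\bigr)$ and expanding $\tfrac{1}{1-(qx-C(x))}=\sum_{s\ge 0}\bigl(qx-C(x)\bigr)^{s}$, the multinomial expansion of $\bigl(qx-C(x)\bigr)^{s}$ selects a factor $qx$ with multiplicity $t_1$ and a factor $-|J_q(i)|x^{i}$ with multiplicity $t_i$ for each $h \le i \le n$; such a choice contributes $(-1)^{r}\tfrac{(t_1+r)!}{t_1!\,t_h!\cdots t_n!}\,q^{t_1}\prod_{i}|J_q(i)|^{t_i}\,x^{\,t_1+\sum_i i\,t_i}$ with $s=t_1+r$, $r=t_h+\cdots+t_n$. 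Collecting the coefficient of $x^{m}$ over all $(t_1,t_h,\dots,t_n)$ with $t_1+ht_h+\cdots+nt_n=m$ reproduces \eqref{equ: exact expression of B_m} verbatim, so all the combinatorial content lives in the identity above.

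To prove the identity I would first establish the recurrence
\[
    b_{\mathcal{J}_q(n)}(m) \;=\; q\, b_{\mathcal{J}_q(n)}(m-1) \;-\; \sum_{i=h}^{n} |J_q(i)|\, b_{\mathcal{J}_q(n)}(m-i), \qquad m \ge 1,
\]
and then multiply by $x^{m}$ and sum, using $b_{\mathcal{J}_q(n)}(0)=1$ and $b_{\mathcal{J}_q(n)}(m)=0$ for $m<0$, which gives $B(x)-1 = qx\,B(x) - C(x)\,B(x)$ and hence the claimed closed form. The recurrence itself starts from the $q\cdot b_{\mathcal{J}_q(n)}(m-1)$ words obtained by appending an arbitrary letter to an avoiding word $w'$ of length $m-1$, and subtracts those pairs $(w',a)$ for which $w'a$ fails to avoid $\mathcal{J}_q(n)$. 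The plan is to exhibit a bijection between these ``bad'' pairs and pairs $(w'',c)$ consisting of an avoiding word $w''$ and a codeword $c\in\mathcal{J}_q(n)$ with $|w''|+|c|=m$, given by the factorization $w'a=w''c$ where $c$ is the codeword occupying the final $|c|$ positions. Counting the right-hand side yields exactly $\sum_{i=h}^{n}|J_q(i)|\,b_{\mathcal{J}_q(n)}(m-i)$.

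The crux, and the main obstacle, is verifying that this factorization map is a well-defined bijection, and this is precisely where both clauses of Definition~\ref{def: non-overlapping codes} are needed. In the forward direction, since $w'$ avoids $\mathcal{J}_q(n)$ every codeword occurrence in $w'a$ must end at the last position; clause~(2) forbids one codeword from being a subword of another, so at most one codeword can be a suffix of $w'a$, making $c$ (and hence $w''$) unique, and $w''$ is avoiding because it is a prefix of the avoiding word $w'$. In the reverse direction I must check that appending a codeword $c$ to an avoiding $w''$ creates no occurrence other than $c$ at the very end: an occurrence inside $w''$ is excluded by avoidance; an occurrence strictly inside the final block would force a codeword to be a subword of $c$, or $c$ to overlap itself, both contradicting clause~(2) (respectively clause~(1), via $\pre(c)\cap\suf(c)\neq\emptyset$); and an occurrence of some codeword $v$ straddling the $w''/c$ boundary would produce a nonempty proper prefix of $c$ equal to a proper suffix of $v$, contradicting clause~(1) that $\pre(c)\cap\suf(v)=\emptyset$. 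Once this boundary analysis is secured the bijection holds, the recurrence follows, and the base cases are consistent since for $m<h$ the subtracted sum is empty and the recurrence reduces to $b_{\mathcal{J}_q(n)}(m)=q^{m}$, as it must.
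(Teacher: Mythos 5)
Your proposal is correct and follows essentially the same route as the paper: the paper likewise derives the recurrence $q\, b_{\mathcal{J}_q(n)}(m-1) - b_{\mathcal{J}_q(n)}(m) = \sum_{i=h}^{n} b_{\mathcal{J}_q(n)}(m-i)\,|J_q(i)|$ by splitting $B_{\mathcal{J}_q(n)}(m-1)\times\mathbb{Z}_q$ into $B_{\mathcal{J}_q(n)}(m)$ and the pairwise disjoint blocks $B_{\mathcal{J}_q(n)}(m-i)\times J_q(i)$, inverts it to the generating function $1/\bigl(1-qx+\sum_{i=h}^{n}|J_q(i)|x^i\bigr)$, and extracts the coefficient of $x^m$ by multinomial expansion of the geometric series. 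Your detailed verification of the factorization bijection simply fills in the step the paper compresses into the phrase ``since $\mathcal{J}_q(n)$ is non-overlapping.''
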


\begin{proof}
Let $m \ge h$. Let $P = B_{\mathcal{J}_q(n)}(m -1) \times \mathbb{Z}_q$ be the set of $q$-ary words of length $m$ that begin with a codeword from $B_{\mathcal{J}_q(n)}(m -1)$ and are followed by an arbitrary $q$-ary letter. 
Clearly, $B_{\mathcal{J}_q(n)}(m) \subseteq P$.
Let $T_i = B_{\mathcal{J}_q(n)}(m -i) \times J_q(i)$ for $h \le i\le n$. 
Thus $T_i$'s are pairwise disjoint and $P \setminus B_{\mathcal{J}_q(n)}(m) = \bigcup_{i=h}^n T_i$ since $\mathcal{J}_q(n)$ is non-overlapping. 
Note that $|P| = q \cdot b_{\mathcal{J}_{q}(n)}(m-1)$, and $|T_i| =  b_{\mathcal{J}_q(n)}(m-i) \cdot |J_q(i)|$ for $h \le i\le n$.
Hence, for any $m \ge h$, we have
\begin{equation*}\label{equ: recurrence relation on b_m}
    q \cdot b_{\mathcal{J}_{q}(n)}(m-1) - b_{\mathcal{J}_{q}(n)}(m) = \sum_{i=h}^{n} b_{\mathcal{J}_q(n)}(m-i) \cdot |J_q(i)| ,
\end{equation*}
and $b_{\mathcal{J}_q(n)}(m)=q^m$ for $0 \le m < h$. Therefore 
\begin{equation}\label{equ: ogf of B_m}
    \begin{aligned}
        \sum_{m=0}^{\infty} b_{\mathcal{J}_q(n)}(m)x^m & = \frac{1}{1 - qx + \sum_{i=h}^n |J_q(i)| x^{i}} \\
                                                        & = 1 + \left( qx - \sum_{i=h}^n |J_q(i)| x^{i}\right) + \left( qx - \sum_{i=h}^n |J_q(i)| x^{i}\right)^2 + \dots .
    \end{aligned}
\end{equation}
Let $b_k$ be the coefficient of $x^m$ in $\left( qx - \sum_{i=h}^n |J_q(i)| x^{i}\right)^k$ for $k \ge 0$. 
Then we have 
\begin{equation*}
    b_k = \sum (-1)^{k - t_1} \frac{k!}{t_1! t_h!\cdots t_n!} q^{t_1}|J_q(h)|^{t_h}|J_q(h+1)|^{t_{h+1}} \cdots |J_q(n)|^{t_n},
\end{equation*}
where $t_1 + t_h + \dots + t_n = k$ and the summation is over all nonnegative integers $t_1,t_h,\dots, t_n$ such that 
\[
    t_1+ht_h + (h+1)t_{h+1} + \dots + nt_n = m \ge 0.
\]
It implies $k \le m$, and Eq.~\eqref{equ: exact expression of B_m} then follows from the relation $b_{\mathcal{J}_q(n)}(m) = \sum_{k=0}^{m}b_k$.
\end{proof}

We remark that Eq.~\eqref{equ: exact expression of B_m} has a combinatorial interpretation.
Denote by $E_{j,i}$ be the set of $m$-length $q$-ary words $(s_1, s_2, \dots, s_m)$ that contain a subword $(s_j, \dots, s_{j + i - 1}) \in J_q(i)$. 
By the principle of inclusion and exclusion, Eq.~\eqref{equ: exact expression of B_m} counts the number of elements in $\mathbb{Z}_q^m \setminus \bigcup E_{j,i} = B_{\mathcal{J}_{q}(n)}(m)$ for all $j + i - 1 \le n $ and $ h \le i \le n$.

Given $|J_q(i)|$ for $h \le i < n$, we are able to derive an upper bound on $|J_q(n)|$ as follows.

\begin{theorem}\label{thm: our recursive bound}
    Let notations be the same as above and $1 \le m < h$. We have
    \begin{equation}\label{equ: our recursive bound}
        |J_q(n)| < \frac{q^{n}}{m+n} - \frac{1}{q^m} \sum_{i=h}^{n-1}b_{\mathcal{J}_q(n)}(m + n -i) \cdot |J_q(i)|,
    \end{equation}
    where $b_{\mathcal{J}_q(n)}(m + n - i)$ is given in Eq.~\eqref{equ: exact expression of B_m}.
\end{theorem}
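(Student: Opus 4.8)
The plan is to establish the bound by combining a counting identity for the total number of length-$(m+n)$ words with the non-overlapping structure, then invoking a pigeonhole/averaging argument to force a strict inequality. First I would fix the target length $m+n$ and partition the words in $\mathbb{Z}_q^{m+n}$ according to where the \emph{first} occurrence (reading left to right) of a codeword from $\mathcal{J}_q(n)$ begins and which codeword it is. For a codeword of length $i$ with $h \le i \le n$ beginning at position $j$, the prefix of length $j-1$ must be $\mathcal{J}_q(n)$-free (that is, lie in $B_{\mathcal{J}_q(n)}(j-1)$, contributing $b_{\mathcal{J}_q(n)}(j-1)$ choices), the codeword itself contributes $|J_q(i)|$ choices, and the remaining $m+n-j-i+1$ suffix positions are arbitrary, contributing $q^{m+n-j-i+1}$ choices. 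Summing over all valid $i$, $j$ and adding the count $b_{\mathcal{J}_q(n)}(m+n)$ of words that avoid every codeword should give a clean decomposition of $q^{m+n}$; this is the analog at the longer length of the recurrence for $b_{\mathcal{J}_q(n)}(\cdot)$ already derived in Eq.~\eqref{equ: recurrence relation on b_m}.

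Next I would isolate the contribution of the length-$n$ codewords to this count. The key observation is that because $m < h \le n$, a length-$n$ codeword occurring in a length-$(m+n)$ word can begin only at positions $j$ with $1 \le j \le m+1$, and in each such placement the preceding $j-1 \le m < h$ letters are automatically $\mathcal{J}_q(n)$-free (no codeword fits in fewer than $h$ positions), so $b_{\mathcal{J}_q(n)}(j-1) = q^{j-1}$. Summing over $j = 1, \dots, m+1$ these occurrences would each be counted with weight $q^{j-1} \cdot |J_q(n)| \cdot q^{m+1-j} = q^m |J_q(n)|$, giving roughly $(m+1)\, q^m |J_q(n)|$ as the aggregate count of length-$(m+n)$ words containing a length-$n$ codeword in a fixed position window. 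The term $q^n/(m+n)$ in the bound suggests dividing the total $q^{m+n}$ by $m+n$, so I expect the argument to compare the $(m+1)$-fold (or effectively $(m+n)$-fold, after accounting for occurrences at all positions up to $m+n-n+1$) overcounting of each codeword against the crude upper bound $q^{m+n}$ on the number of distinct length-$(m+n)$ words that can arise.

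The strictness of the inequality is where I would lean on the non-overlapping property most carefully: distinct placements of codewords from the non-overlapping code cannot overlap, so the sets of words generated by different $(j,i)$ pairs are genuinely distinct, and one shows that the union of all these word-sets has size strictly less than $q^{m+n}$ (there exist words avoiding all codewords whenever the code is proper, i.e.\ $b_{\mathcal{J}_q(n)}(m+n) \ge 1$). Rearranging the resulting strict inequality to solve for $|J_q(n)|$, peeling off the shorter-codeword contributions $\sum_{i=h}^{n-1} b_{\mathcal{J}_q(n)}(m+n-i)\,|J_q(i)|$ as a correction term and dividing through by $q^m(m+n)$ should yield exactly Eq.~\eqref{equ: our recursive bound}. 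The main obstacle will be the bookkeeping of the overcounting factor: I must verify that each length-$n$ codeword placement is counted with total multiplicity $m+n$ (not merely $m+1$) across the full range of starting positions $1 \le j \le m+n-h+1$, using that a length-$n$ codeword occupies $n$ of the $m+n$ positions and that the non-overlapping condition prevents two length-$n$ codewords from coexisting in a single length-$(m+n)$ word when $m < h$; reconciling this multiplicity with the denominator $m+n$ and the factor $1/q^m$ is the delicate step that converts the counting identity into the stated strict bound.
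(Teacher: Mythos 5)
Your linear first-occurrence decomposition is a correct counting identity, but it cannot yield the stated bound, and the failure is precisely at the step you flag as delicate. In that decomposition every word of $\mathbb{Z}_q^{m+n}$ is counted exactly once, according to its leftmost codeword occurrence, so a length-$n$ codeword is weighted by the number of its possible linear starting positions, which is $(m+n)-n+1=m+1$, not $m+n$; indeed the range you yourself write, $1 \le j \le m+n-n+1$, contains only $m+1$ values. No amount of bookkeeping can raise this multiplicity to $m+n$, because a block of length $n$ simply has only $m+1$ placements inside a linear word of length $m+n$. Carrying your identity through (using $b_{\mathcal{J}_q(n)}(j-1)=q^{j-1}$ for $j-1\le m<h$) gives $(m+1)q^m|J_q(n)| < q^{m+n} - \sum_{i=h}^{n-1}\sum_{j} b_{\mathcal{J}_q(n)}(j-1)\,|J_q(i)|\,q^{m+n-i-j+1}$, i.e.\ a bound whose leading term is $q^n/(m+1)$, strictly weaker than the $q^n/(m+n)$ of Eq.~\eqref{equ: our recursive bound}, and whose correction term also differs in form from $\frac{1}{q^m}\sum_{i=h}^{n-1} b_{\mathcal{J}_q(n)}(m+n-i)\cdot|J_q(i)|$.

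The idea you are missing is to count \emph{cyclic} occurrences, which is what the paper (following Bilotta) does. Let $X$ be the set of words in $\mathbb{Z}_q^{m+n}$ containing exactly one codeword of $\mathcal{J}_q(n)$ as a cyclic subword. Cyclically, a codeword of any length $i$ has exactly $m+n$ starting positions, and the non-overlapping property guarantees that, once the position and the codeword are fixed, the remaining $m+n-i$ letters range exactly over $B_{\mathcal{J}_q(n)}(m+n-i)$: by conditions (1) and (2) of Definition~\ref{def: non-overlapping codes}, no other occurrence can straddle the boundary of the placed codeword. Hence $|X|=\sum_{i=h}^{n}(m+n)\,b_{\mathcal{J}_q(n)}(m+n-i)\,|J_q(i)|$, and the $i=n$ term equals $(m+n)q^m|J_q(n)|$ because $m<h$ forces $b_{\mathcal{J}_q(n)}(m)=q^m$. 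Strictness comes from the $q$ constant words, which cannot contain any codeword (a codeword cannot be constant, since its one-letter prefix would equal its one-letter suffix), so $|X|\le q^{m+n}-q<q^{m+n}$; dividing by $(m+n)q^m$ gives exactly Eq.~\eqref{equ: our recursive bound}. The cyclic viewpoint is thus not recoverable bookkeeping on top of your decomposition; it is the mechanism that manufactures the factor $m+n$, and without it your argument stalls at the weaker $q^n/(m+1)$ bound.
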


\begin{proof}
    We follow the argument by Bilotta~\cite[Proposition 7]{bilotta2017variable}.
    Let $X \subseteq \mathbb{Z}_q^{m + n}$ be the set of all $q$-ary words with length $(m + n)$ that each contains exactly one codeword from $\mathcal{J}_q(n)$ as a cyclic subword. 
    For every $\mathbf{\omega} \in X$, there are $(m + n)$ possible positions for the cyclic subword from $\mathcal{J}_q(n)$. Since codewords in $\mathcal{J}_q(n)$ do not overlap with each other, we have
    
    \[
        \begin{aligned}
            |X| & = \sum_{i=h}^{n} (m + n) \cdot b_{\mathcal{J}_q(n)}(m + n - i) \cdot |J_q(i)| \\
                & = (m + n)q^m|J_q(n)| + \sum_{i=h}^{n-1} (m + n) \cdot b_{\mathcal{J}_q(n)}(m + n - i) \cdot |J_q(i)|.
        \end{aligned}
    \]

    On the other hand, $|X| \le q^{m + n} - q < q^{m + n}$ since the $q$ constant words of length $m + n$ (e.g., $00\dots 0$) cannot have a codeword from $\mathcal{J}_q(n)$ as a cyclic subword. 
    We have
    \[
        |X| = (m + n)q^m|J_q(n)| + \sum_{i=h}^{n-1} (m + n) \cdot b_{\mathcal{J}_q(n)}(m + n - i) \cdot |J_q(i)| < q^{m + n},
    \]
    and Eq.~\eqref{equ: our recursive bound} then follows.
    Note that $b_{\mathcal{J}_q(n)}(m + n - i)$ does not involve $|J_q(n)|$.
\end{proof}

We remark that Eq.~\eqref{equ: our recursive bound} is a generalization of Eq.s~\eqref{equ: bilotta's upper bound} and~\eqref{equ: com bound}.
Take $m = 1$ in Eq.~\eqref{equ: our recursive bound}, we have
\begin{equation*}
    \begin{aligned}
        |J_q(n)| & < \frac{q^n}{n + 1} - \frac{1}{q}\sum_{i = h}^{n-1}b_{\mathcal{J}_q}(n + 1 - i) \cdot |J_q(i)| \\
                    & = \frac{q^n}{n + 1} - \frac{1}{q}\sum_{i = h}^{n - h + 1}b_{\mathcal{J}_q}(n + 1 - i) \cdot |J_q(i)| - \frac{1}{q}\sum_{i = n - h + 2}^{n - 1}q^{n + 1 - i}  |J_q(i)| \\
                    & = \frac{q^n}{n + 1} - \sum_{i = 1}^{h-2} q^{i}|J_q(n - i)| - \frac{1}{q}\sum_{i = h}^{n - h + 1}b_{\mathcal{J}_q}(n + 1 - i) \cdot |J_q(i)|.
    \end{aligned}
\end{equation*}        
On the other hand, take $h = n$. Eq.~\eqref{equ: our recursive bound} becomes $q^n/(m + n)$ since the summation of the RHS vanishes.

Recall that $C(n,q)$ is the maximum size of $q$-ary non-overlapping codes of length $n$. 
Clearly, $|J_q(n)| \le C(n,q)$ for $J_q(n)$ is also non-overlapping. 
Also, by Theorem~\ref{thm: our recursive bound}, we have
\begin{equation}\label{equ: our recursive bound min}
    |J_q(n)| < \min_{1 \le m < h} \left\{   \frac{q^{n}}{m+n} - \frac{1}{q^m} \sum_{i=h}^{n-1}b_{\mathcal{J}_q(n)}(m + n -i) \cdot |J_q(i)| \right \}.
\end{equation}
In particular, for $h = n$,
\begin{equation*}
    |J_q(n)| < \min_{1 \le m < n} \left\{ \frac{q^{n}}{m+n} \right \} = \frac{q^n}{2n-1}.
\end{equation*}

Ideally, we hope Eq.~\eqref{equ: our recursive bound min} provides a tighter upper bound on $|J_q(n)|$ than Eq.s~\eqref{equ: lev bound} and~\eqref{equ: com bound}.
However, Eq.~\eqref{equ: our recursive bound min} depends on the given non-overlapping code $\mathcal{J}_q(n)$ and cannot be easily analysed.
In fact, it seems hard to find a tight bound of $|J_q(n)|$ only by $|J_q(i)|$ for $h \le i < n$, and $|J_q(n)|$ is highly related to the structure of $J_q(i)$.
For example, assume that $0001 \in \mathcal{J}_2(5)$. 
It can be readily seen that $\mathcal{J}_2(5) = \{0001\}$ and $|J_q(5)| = 0$.

A more challenging problem is to find a direct upper bound for the maximum size of $\mathcal{J}_q(n)$.
Trivially, $|\mathcal{J}_q(n)| = \sum_{i = h}^{n}J_q(i) \le \sum_{i = h}^{n} C(i,q)$ and we are interested in finding a tighter bound.
To see the difficulty of this problem, we remark that Eq.~\eqref{equ: lev bound} cannot be easily generalized to bound $|\mathcal{J}_q(n)|$.
Let $f(x) = 1 - qx + \sum_{i=h}^n |J_q(i)| x^{i}$ be the denominator of Eq.~\eqref{equ: ogf of B_m}.
By Pringsheim's Theorem and Descartes' rule of signs, $f(x)$ must have 2 positive real roots.
Note that $f(0) > 0$ and $f(x)$ has a unique minima at $x = x_0$ if $h = n$, where 
\[
x_0 = \left( \frac{q}{n |J_q(n)|} \right)^{1/(n-1)}.
\]
Hence Eq.~\eqref{equ: lev bound} follows from $f(x_0) \le 0$.
However, when $h < n$, the minima $x_0$ of $f(x)$ cannot be explicitly given, and $f(x_0) \le 0$ may not yield an upper bound of $\sum_{i=h}^n|J_q(i)|$.

\section{Conclusions} \label{sec: conclusions}

We give a generic method to extend binary non-overlapping codes to $q$-ary, and investigate this generalization on Construction~I ( to Construction~I') and Construction~II (to Construction~II').
Construction~I' provides a large non-expandable fixed-length non-overlapping codes whose sizes can be explicitly counted, and Construction~II' is the first construction for $q$-ary variable-length non-overlapping codes.
By the generating function approach, we establish new results on their cardinalities and greatly simplify some previous arguments.
In this process, we also find a new combinatorial interpretation and non-expandability for Construction~I'.
In addition, we answer an open problem by Bilotta~\cite{bilotta2017variable} and give a recursive upper bound on the maximum size of $q$-ary variable-length non-overlapping codes.

Further studies on non-overlapping codes may be devoted to more constructions and tighter bounds on the cardinality of codes.
In particular, it would be interesting but difficult to find a nontrivial direct upper bound on the maximum size of variable-length non-overlapping codes.
Moreover, our generalization on other constructions for non-overlapping codes could be further studied.



\appendix

\subsection{Fixed-length non-overlapping codes}

Tables~\ref{tab: 3-ary fixed-length non-overlapping codes} -~\ref{tab: 6-ary fixed-length non-overlapping codes} list the largest cardinalities of $q$-ary ($3 \le q \le 6$) fixed-length non-overlapping codes by
\begin{enumerate}
    \item Construction~I;
    \item Construction~I';
    \item Bilotta \textit{et al.}'s construction~\cite{bilotta2012new}; this is a binary construction based on Dyck paths. We generalize it to $q$-ary by the method in Definition~\ref{def: binary to q-ary codes}.
    \item Barcucci \textit{et al}'s construction~\cite{barcucci2016cross}; this is a $q$-ary ($q \ge 3$) construction based on colored Motzkin paths.
\end{enumerate}
For each construction, the largest cardinality is obtained by properly choosing parameters.
For the cardinalities of binary fixed-length non-overlapping codes, we refer to~\cite[Table 2]{bajic2014simple}.

\subsection{Variable-length non-overlapping codes}
Tables~\ref{tab: 3-ary V_I,J} and~\ref{tab: 4-ary V_I,J} list the largest cardinalities of $3$-ary and $4$-ary $\mathcal{V}_{I,J}^{(k)}(n)$ for $8 \le n \le 23$ by properly choosing $I,J$. 
For the binary case, we refer to~\cite[Table II]{bilotta2017variable}.


\begin{table}[h!]
    \centering
    \caption{Cardinalities of $3$-ary fixed-length non-overlapping codes}
    \label{tab: 3-ary fixed-length non-overlapping codes}
    \begin{tabular}{c|llll}
    \toprule
    \hline
    $n$     & Construction~I & Construction~I' & Bilotta \textit{et al.}~\cite{bilotta2012new} & Barcucci \textit{et al.}~\cite{ barcucci2016cross}  \\ 
    \hline
    3  & 4      & 4      & 4      & 4      \\
    4  & 8      & 8      & 4      & 7      \\
    5  & 16     & 16     & 16     & 16     \\
    6  & 32     & 32     & 24     & 36     \\
    7  & 88     & 88     & 80     & 87     \\
    8  & 240    & 240    & 128    & 210    \\
    9  & 656    & 656    & 448    & 535    \\
    10 & 1792   & 1792   & 736    & 1350   \\
    11 & 4896   & 4896   & 2688   & 3545   \\
    12 & 13376  & 13376  & 4608   & 9205   \\
    13 & 36544  & 36544  & 16896  & 24698  \\
    14 & 99840  & 99840  & 29056  & 65467  \\
    15 & 272768 & 272768 & 109824 & 178375 \\
    16 & 745216 & 745216 & 194560 & 480197 \\ 
    \hline
    \bottomrule
    \end{tabular}
\end{table}

\begin{table}[h!]
    \centering
    \caption{Cardinalities of $4$-ary fixed-length non-overlapping codes}
    \label{tab: 4-ary fixed-length non-overlapping codes}
    \begin{tabular}{c|llll}
    \toprule
    \hline
    $n$     & Construction~I & Construction~I' & Bilotta \textit{et al.}~\cite{bilotta2012new} & Barcucci \textit{et al.}~\cite{ barcucci2016cross}  \\ 
    \hline
    3  & 9        & 9        & 9        & 9        \\
    4  & 27       & 27       & 16       & 25       \\
    5  & 81       & 81       & 64       & 72       \\
    6  & 243      & 243      & 192      & 223      \\
    7  & 729      & 729      & 640      & 712      \\
    8  & 2187     & 2187     & 2048     & 2334     \\
    9  & 7371     & 7371     & 7168     & 7868     \\
    10 & 27945    & 27945    & 23552    & 26731    \\
    11 & 105948   & 105948   & 86016    & 93175    \\
    12 & 401679   & 401679   & 294912   & 324520   \\
    13 & 1522881  & 1522881  & 1081344  & 1157031  \\
    14 & 5773680  & 5773680  & 3719168  & 4104449  \\
    15 & 21889683 & 21889683 & 14057472 & 14874100 \\
    16 & 82990089 & 82990089 & 49807360 & 53514974 \\
    \hline
    \bottomrule
    \end{tabular}
\end{table}

\begin{table}[h!]
    \centering
    \caption{Cardinalities of $5$-ary fixed-length non-overlapping codes}
    \label{tab: 5-ary fixed-length non-overlapping codes}
    \begin{tabular}{c|llll}
    \toprule
    \hline
    $n$     & Construction~I & Construction~I' & Bilotta \textit{et al.}~\cite{bilotta2012new} & Barcucci \textit{et al.}~\cite{ barcucci2016cross}  \\ 
    \hline
    3  & 16         & 18         & 18         & 16         \\
    4  & 64         & 64         & 36         & 61         \\
    5  & 256        & 256        & 216        & 224        \\
    6  & 1024       & 1024       & 648        & 900        \\
    7  & 4096       & 4096       & 3240       & 3595       \\
    8  & 16384      & 16384      & 10368      & 15014      \\
    9  & 65536      & 65536      & 54432      & 63135      \\
    10 & 262144     & 278964     & 178848     & 271136     \\
    11 & 1048576    & 1219860    & 979776     & 1178677    \\
    12 & 4870144    & 5333364    & 3359232    & 5167953    \\
    13 & 23515136   & 23515136   & 18475776   & 22986100   \\
    14 & 113541120  & 113541120  & 63545472   & 102403229  \\
    15 & 548225024  & 548225024  & 360277632  & 463098075  \\
    16 & 2647064576 & 2647064576 & 1276508160 & 2089302415 \\
    \hline
    \bottomrule
    \end{tabular}
\end{table}

\begin{table}[h!]
    \centering
    \caption{Cardinalities of $6$-ary fixed-length non-overlapping codes}
    \label{tab: 6-ary fixed-length non-overlapping codes}
    \begin{tabular}{c|llll}
    \toprule
    \hline
    $n$     & Construction~I & Construction~I' & Bilotta \textit{et al.}~\cite{bilotta2012new} & Barcucci \textit{et al.}~\cite{ barcucci2016cross}  \\ 
    \hline
    3  & 25          & 32          & 32          & 25          \\
    4  & 125         & 128         & 81          & 121         \\
    5  & 625         & 625         & 512         & 550         \\
    6  & 3125        & 3125        & 2187        & 2739        \\
    7  & 15625       & 15625       & 10935       & 13260       \\
    8  & 78125       & 78125       & 52488       & 67740       \\
    9  & 390625      & 390625      & 275562      & 342676      \\
    10 & 1953125     & 1953125     & 1358127     & 1787415     \\
    11 & 9765625     & 10027008    & 7440174     & 9324647     \\
    12 & 48828125    & 54788096    & 38263752    & 49456240    \\
    13 & 244140625   & 299368448   & 210450636   & 263776127   \\
    14 & 1220703125  & 1635778560  & 1085733963  & 1417981855  \\
    15 & 7068828125  & 8938061824  & 6155681103  & 7688015908  \\
    16 & 41381640625 & 48838475776 & 32715507960 & 41785951916 \\
    \hline
    \bottomrule
    \end{tabular}
\end{table}

\begin{table}[h!]
    \centering
    \caption{Cardinalities of $3$-ary $\mathcal{V}_{I,J}^{(k)}(n)$}
    \label{tab: 3-ary V_I,J}
    \begin{tabular}{c|llllllll}
    \toprule
    \hline
    \diagbox{$n$}{$k$} & 3 & 4 & 5 & 6 & 7 & 8 & 9 & 10  \\
    \hline
    8 & 16 \\
    9 & 64 \\
    10 & 128 & 32 \\
    11 & 320 & 128 \\
    12 & 800 & 416 & 64 \\
    13 & 1760 & 992 & 256 \\
    14 & 4128 & 2720 & 832 & 128 \\
    15 & 9696 & 7328 & 2560 & 512 \\
    16 & 22112 & 19680 & 6656 & 1664 & 256 \\
    17 & 51296 & 51552 & 18944 & 5120 & 1024 \\
    18 & 119008 & 137312 & 53632 & 15488 & 3328 & 512 \\
    19 & 274144 & 365024 & 151168 & 42368 & 10240 & 2048 \\
    20 & 634336 & 969824 & 425216 & 123008 & 30976 & 6656 & 1024 \\
    21 & 1467616 & 2571104 & 1188608 & 356480 & 93184 & 20480 & 4096 \\
    22 & 3389664 & 6828896 & 3341568 & 1031552 & 263168 & 61952 & 13312 & 2048 \\
    23 & 7837920 & 18132320 & 9388800 & 2980736 & 773120 & 186368 & 40960 & 8192 \\
    \hline
    \bottomrule
    \end{tabular}
\end{table}

\begin{table}[h!]
    \centering
    \caption{Cardinalities of $4$-ary $\mathcal{V}_{I,J}^{(k)}(n)$}
    \label{tab: 4-ary V_I,J}
    \begin{tabularx}{1.02\textwidth}{c|llllllll}
    \toprule
    \hline
    \diagbox{$n$}{$k$} & 3 & 4 & 5 & 6 & 7 & 8 & 9 & 10  \\
    \hline
    8 & 256 \\
    9 & 1280 \\
    10 & 3328 & 1024 \\
    11 & 11520 & 5120 \\
    12 & 40192 & 21504 & 4096 \\
    13 & 122112 & 70656 & 20480 \\
    14 & 400640 & 267264 & 86016 & 16384 \\
    15 & 1318144 & 988160 & 348160 & 81920 \\
    16 & 4201728 & 3675136 & 1265664 & 344064 & 65536 \\
    17 & 13638912 & 13374464 & 4935680 & 1392640 & 327680 \\
    18 & 44309760 & 49288192 & 19091456 & 5586944 & 1376256 & 262144 \\
    19 & 142875904 & 181408768 & 73617408 & 21315584 & 5570560 & 1310720 \\
    20 & 462691584 & 667948032 & 284381184 & 84230144 & 22347776 & 5505024 & 1048576 \\
    21 & 1498684672 & 2454721536 & 1093881856 & 331694080 & 89456640 & 22282240 & 5242880 \\
    22 & 4845739264 & 9031390208 & 4218638336 & 1304772608 & 349503488 & 89391104 & 22020096 & 4194304 \\
    23 & 15683820800 & 33224135680 & 16264679424 & 5129977856 & 1389690880 & 357826560 & 89128960 & 20971520 \\
    \hline
    \bottomrule
    \end{tabularx}
\end{table}

\end{document}